\documentclass[journal]{IEEEtran}

\usepackage[utf8]{inputenc}   
\usepackage[english]{babel}   

\usepackage{etoolbox}
\usepackage{nomencl}  
\makenomenclature 
\usepackage[hidelinks]{hyperref} 

\usepackage[linesnumbered,ruled]{algorithm2e}
\SetKw{Continue}{continue}

\usepackage{amsmath}          
\usepackage{amssymb}          
\usepackage{amsthm}           
\usepackage{bm}               
\usepackage{mathrsfs}         
\usepackage[mathscr]{euscript} 
\usepackage{bbm}              
\usepackage{dsfont}           
\usepackage{pifont}           

\usepackage{graphicx}         
\usepackage{epstopdf}         
\usepackage{epsfig}           
\usepackage{booktabs}         
\usepackage{multirow}
\usepackage{array}            
\usepackage{tabularx}         
\usepackage{blkarray}         
\usepackage{bigstrut}         
\usepackage{arydshln}         
\usepackage{subcaption}
\usepackage{makecell}

\usepackage{float}            
\usepackage{balance}          
\usepackage{microtype}        

\usepackage{xcolor}           
\usepackage{cite}             
\usepackage{enumitem}         
\usepackage{ulem}             
\newtheorem{theorem}{Theorem}
\newtheorem{lemma}{Lemma}

\makeatletter
\newcommand{\removelatexerror}{\let\@latex@error\@gobble}

\newcommand{\Rmnum}[1]{\expandafter\@slowromancap\romannumeral #1@}
\makeatother

\renewcommand\nomgroup[1]{%
  \item[\itshape
  \ifstrequal{#1}{A}{Abbreviations}{%
  \ifstrequal{#1}{P}{Power Grid Parameters}{%
  \ifstrequal{#1}{T}{Transformations}{%
  }}}%
]}

\addto\captionsenglish{\renewcommand{\figurename}{Fig.}}

\long\def\comment#1{}

\newfont{\bbb}{msbm10 scaled 700}

\newcommand{\RR}{\mathbb{ R}}

\newcommand{\av}{{\bf a}}

\newcommand{\cv}{{\bf c}}

\newcommand{\ev}{{\bf e}}

\newcommand{\hv}{{\bf h}}

\newcommand{\rv}{{\bf r}}

\newcommand{\xv}{{\bf x}}
\newcommand{\yv}{{\bf y}}
\newcommand{\zv}{{\bf z}}

\newcommand{\zerov}{{\bf 0}}

\newcommand{\Hm}{{\bf H}}
\newcommand{\Id}{{\bf I}}

\newcommand{\Km}{{\bf K}}

\newcommand{\Wm}{{\bf W}}

\newcommand{\Ec}{{\cal E}}

\newcommand{\Gc}{{\cal G}}
\newcommand{\Hc}{{\cal H}}

\newcommand{\Lc}{{\cal L}}

\newcommand{\Oc}{{\cal O}}

\newcommand{\Sc}{{\cal S}}

\newcommand{\Vc}{{\cal V}}

\newcommand{\RNum}[1]{\uppercase\expandafter{\romannumeral #1\relax}}

\renewcommand{\arg}{{\hbox{arg}}}

\newcommand{\eqdef}{\stackrel{\Delta}{=}}

\newcommand{\squeezeequ}{\medmuskip=2mu \thinmuskip=1mu \thickmuskip=3mu}

\newcommand{\supersqueezeequ}{\medmuskip=1mu \thinmuskip=0mu \thickmuskip=2mu \nulldelimiterspace=-1pt \scriptspace=0pt}

\newcommand{\Tsupersqueezeequ}{\medmuskip=0.1mu \thinmuskip=0mu \thickmuskip=0.1mu \nulldelimiterspace=-1pt \scriptspace=0pt}

\def\LRT#1#2{\!%
\raisebox{.2ex}{$%
\genfrac{}{}{0pt}{}%
  {\genfrac{}{}{0pt}{}{\scriptstyle\;#1}{\displaystyle\gtrless}}%
  {\raisebox{-1.25ex}{$\scriptstyle\;#2$}}%
$}%
\!}

\begin{document}
\title{BT-MTD: Bus Traversal-based Moving Target Defense for Smart Grid}
\author{
Jingyi Yan, Ke Sun,~\IEEEmembership{Senior~Member,~IEEE,} Zhenglin Li, Hongying Jia
\thanks{This work was supported by Natural Science Foundation of China (No. 62403301, 62421004, 62203289), and by Shanghai Pujiang Program (No. 23PJ1403000).

J. Yan and K. Sun are with the College of Computer Engineering and Science, Shanghai University, Shanghai, China (e-mail: jingyiyan@shu.edu.cn; ke\_sun@shu.edu.cn). (Corresponding author: Ke Sun.)

Z. Li is with the School of Future Technology, Shanghai University, Shanghai
(email: zhenglin\_li@shu.edu.cn).

Hongying Jia is with the Purple Mountain Laboratories, Nanjing 211111, China (e-mail: jiahongying@pmlabs.com.cn) 
}}
\maketitle

\begin{abstract}
Moving Target Defense (MTD) is a proactive security strategy designed to enhance cyber-resilience by dynamically altering system parameters, thereby preventing adversaries from acquiring the critical information needed to execute stealth attacks.
In this paper, we consider the case in which the operator modifies the admittance of branches to enable MTD, and focus on the problem of effectively protecting the system with fewer number of branch admittance modifications and shorter computational time. 
Specifically, we identify the ineffectual branches whose admittance modification do not contribute to the improvement of MTD effectiveness via theoretical analysis. 
Building on these insights, we propose the Bus Traversal-based MTD (BT-MTD), which is a bus-oriented algorithm that traverses over the buses of the network according to analytically derived guidelines. 
The performance of the BT-MTD is evaluated and compared with four existing strategies on standard IEEE test systems, demonstrating its robustness and superior performance in effectiveness, efficiency, and computational cost. 
The code of BT-MTD is available at: \href{https://github.com/YJY101/BT-MTD}{https://github.com/YJY101/BT-MTD}. 
\end{abstract}

\begin{IEEEkeywords}
Cyber-physical power system, false data injection attacks, moving target defense, branch selection strategy 
\end{IEEEkeywords}

\IEEEpeerreviewmaketitle

\section{Introduction}
\IEEEPARstart{T}{he} global energy landscape is undergoing a profound paradigm shift from conventional power grids to intelligent and highly integrated smart grids. 
At the core of this transformation is the deep fusion of information and communication technologies with the physical power infrastructure \cite{abdi_role-of-dl_2024, meyda_review_2024}.
The integration of advanced technologies, such as bidirectional communication, advanced sensing, and sophisticated control, unlocks the immense potential of the modern smart grid\cite{vasquez-plaza_aggregated_2025}. 
While these technologies enhance grid efficiency, reliability, and renewable energy integration capability, this progress is a double-edged sword. 
They also introduce new complexities and interdependencies\cite{sou_resilient_2024}.

This deep integration and the resulting complexities can create cybersecurity vulnerabilities, presenting significant threats to the stable and secure operation of the smart grid.
Among these, False Data Injection (FDI) attacks have emerged as a particularly severe threat \cite{ten_vulnerability_2008, Sun_Stealth_2020, mohammadpourfard_adaptive_2025}. 
An FDI attack is a sophisticated cyber-physical attack where an adversary strategically manipulates sensor measurements\cite{kececi_federated_2025} to mislead the smart grid's Energy Management System \cite{kosut_malicious_2011, liu_matrix-completion_2024}. 
The ultimate goal is to corrupt the system's state estimation\cite{esmalifalak_bad_2013}, potentially leading to flawed operational decisions, economic losses \cite{seshasai_design_2024, bi_profit-oriented_2022}
, or even cascading failures and widespread blackouts\cite{zhang_voltage-stability_2022, liao_cascading_2017}.
What makes these attacks especially pernicious is their ability to circumvent traditional Bad Data Detection (BDD) mechanisms, rendering them ``stealthy'' \cite{liu_false_2011, hug_vulnerability_2012}.
Recent studies even propose ``improved stealthy attacks'' designed to evade advanced detectors with accumulative properties by ensuring the attack signature is present for only a finite duration \cite{ning_improved_2024}.
The feasibility of achieving both deception and evasion critically depends on the attacker's knowledge of the smart grid's system information, for which the attackers can actively acquire the required system information by learning from historical or relevant information \cite{sun_2023_asymptotic, sun_2019_learning, lakshminarayana_2020_data, xu_blending_2023, chen_localization_2022, liu_matrix-completion_2024}. 
Moreover, the quasi-static operational nature of power systems \cite{cutsem_comprehensive_1995} exacerbates this issue, as historical information remains relevant over extended periods, giving adversaries a wide window to study the system and plan their attack.
As a result, active defense strategies, which proactively introduce controlled dynamics into the system, are essential for effectively invalidating an adversary's acquired information \cite{chen_localization_2022}.

As a representative of active defense strategies, Moving Target Defense (MTD) is a proactive security paradigm that creates a constantly changing attack surface by dynamically altering system configurations and parameters\cite{abdi_role-of-dl_2024, tan_survey_2023}.
In smart grids, this can be implemented by reconfiguring the network topology via transmission line switches \cite{zhang_2012_flexible, morrow_2012_topology, wang_2015_effects} or by modifying key electrical parameters, such as branch admittance, using Distributed Flexible AC Transmission System (D-FACTS) devices \cite{MTD, zhang_sliding_2023, kececi_federated_2025, bi_profit-oriented_2022, seshasai_design_2024}.
By compelling adversaries to continuously re-learn the system's state, these dynamic adaptations invalidate their prior knowledge, significantly increasing the complexity and cost of executing an attack, thereby strengthening the grid's overall security posture \cite{tan_strategy_2025}.

The effectiveness of MTD against FDI attacks is typically evaluated by its ability to eliminate or constrain the evasion capability of attacks.
Specifically, for the widely-used weighted least square (WLS) estimation and residual-based detection approach, the attacks that lie in the column space of the Jacobian matrix are undetectable stealth attacks \cite{liu_false_2011}.
Consequently, a key metric for MTD effectiveness is the dimension of the intersection between the column spaces of the Jacobian matrices before and after MTD deployment \cite{zhang_2020_analysis, zhang_2020_hiddenness, xu_2023_robust, liu_reactance_2018, tian_enhanced_2019}, which is mathematically equivalent to the rank of the matrices augmented by the pre-MTD and post-MTD Jacobian matrices.
This metric, known as the completeness of MTD, quantifies the residual stealth space available to attackers post-MTD.

Beyond effectiveness, the cost of implementing MTD is another critical perspective.
Costs encompass the deployment of D-FACTS devices \cite{lakshminarayana_2020_cost, zhang_double-benefit_2022} and the impact on power system operations, such as altered power flows \cite{wang_MMTD_2023, mohammadpourfard_adaptive_2025}.
Note that there is a fundamental tradeoff between cost and effectiveness, where achieving higher security typically requires a greater investment.
Addressing this trade-off has been a central challenge within MTD research.
Consequently, a primary focus of the research has been on optimizing the placement of D-FACTS devices to maximize security under a given deployment cost.
These placement strategies can be broadly categorized into random, heuristic, and graph-theoretic approaches.

Initial research explored using random perturbations to expose FDI attacks \cite{morrow_2012_topology, davis_power_2012, rahman_2014_moving}.
However, this ``blind" approach proved impractical for real-world applications due to its unpredictability, potential to disrupt grid stability, and frequent failure against adaptive attackers \cite{giraldo_decentralized_2022, xu_blending_2023, zhang_voltage-stability_2022, wang_small-signal_2025}.
To improve upon these random methods, research shifted to heuristic strategies.
These typically employ greedy or trial-and-error approaches to find modifications that decrease the intersection between the column spaces of the pre-MTD and post-MTD Jacobian matrices, which generally entails repeated computation of the rank of the augmented Jacobian matrix \cite{liu_reactance_2018, zhang_2020_analysis, tian_enhanced_2019, mohammadpourfard_adaptive_2025}.
A primary limitation of such approaches is the high computational complexity.
Evaluating the rank of an $m\times n$ matrix (via Singular Value Decomposition or QR factorization) typically requires $\mathcal{O}(n^3)$ (assuming $m >n$) per iteration \cite[Ch.~5, 8]{golub_matrix_2013}, which becomes prohibitive for large-scale grids.
Additionally, heuristic search processes are prone to entrapment in local optima, potentially failing to achieve the maximal security configuration \cite{mohammadpourfard_adaptive_2025}.
To overcome these limitations, recent work has turned to more structured, graph-theoretic approaches \cite{zhang_double-benefit_2022, li_feasibility_2020, lakshminarayana_moving_2021, liu_2021_optimal, xu_2023_robust, wang_stealthiness_2025}.
While much more efficient, they still face challenges, such as selecting redundant branches (increasing cost without improving security) or exhibiting sensitivity to measurement noise and initialization conditions.

In this paper, we consider the implementation of MTD through the modification of branch admittances using D-FACTS devices, focusing on ensuring MTD effectiveness with a reduced number of branch admittance modifications and minimal computational overhead.
The structure and main contributions of this work are summarized as follows.

(1) Firstly, we propose an optimization formulation where MTD effectiveness is quantified by the rank of an augmented Jacobian matrix.
Specifically, we demonstrate that removing the reference bus column from the Jacobian matrix does not affect the effectiveness analysis, see Lemma \ref{Lem:comH}. 

(2) Subsequently, through rigorous theoretical analysis, we identify three specific cases (Theorem \ref{Theo:deg1}-\ref{Theo:cycle}) involving redundant branches whose modification does not contribute to effectiveness improvement.
We precisely quantify the marginal rank gain when the operator perturbs an additional branch.
Furthermore, building on Theorem \ref{Theo:deg1}, we propose a novel graph-reduction procedure that iteratively eliminates leaf buses without sacrificing the rank, thereby shrinking the problem space and accelerating all subsequent computations.

(3) Based on these findings, we propose the Bus Traversal-Based MTD (BT-MTD) algorithm.
Unlike heuristic methods that rely on numerical rank evaluations, BT-MTD leverages discrete topological invariants (such as cycles and connectivity) to guide branch selection.
By utilizing constant-time pruning tests within a single traversal, the computational complexity is reduced to quasi-linear time. Finally, comparisons with four existing algorithms demonstrate the robustness and superior performance of BT-MTD.

\section{System Model}
\subsection{Notational Conventions}
Matrices are denoted by bold uppercase letters, such as $\Hm$, and vectors by bold lowercase letters, such as $\av$.
The $i$-th column of matrix $\Hm$ is represented by $\hv^i$.
The $\ell_2$ norm of a vector is represented by $\| \cdot \|_{2}$.
The rank and column space of a matrix are given by $\operatorname{rank}(\cdot)$ and $\operatorname{col}(\cdot)$, respectively.
Sets are represented by calligraphic font, such as $\Vc$, and their cardinality is given by $\mid \cdot \mid$; an exception is made for  the set of real numbers, which is denoted by $\RR$. 

The topology of a power system is described by a connected graph $\Gc = ( \Vc, \Ec)$, where $\Vc = \{ v_1, \ldots, v_{n+1}\}$ denotes the set of $n + 1$ buses (vertices), and $\Ec$ represents the set of $l$ branches (edges).
Each branch $e_{ij} \in \Ec$ connects two buses $v_i$ and $v_j$, and has an associated admittance $b_{ij}$.
The set of branches incident to bus $v_i$ is denoted by $\Ec_{v_i}$, and as a result, it holds for the degree of bus $v_i$ that $\deg(v_i) = \lvert \Ec_{v_i} \rvert$. 

\subsection{Linear Observation Model}
The linearized observation model is given by 
\begin{IEEEeqnarray}{c}
\label{Equ:LOM}
\yv = \bar{\Hm} \xv + \zv,
\end{IEEEeqnarray}
where $\yv \in \RR^{m}$ is the measurement vector; 
$\bar{\Hm} \in \RR^{m \times n}$ is the Jacobian matrix; 
$\xv \in \RR^{n}$ is the vector of state variables; 
and $\zv \in \RR^{m}$ is the additive measurement noise.
In typical settings, the measurements consist of the net active power injections at buses and the active power flows in both directions along transmission lines, while the state variables correspond to the voltage angle differences between buses and a chosen reference bus \cite{grainger_power_1994, abur_power_2004}.
Under this convention, the Jacobian matrix $\bar{\Hm}$ can be obtained by removing the column corresponding to the reference bus from a matrix $\Hm \in \RR^{m \times (n+1)}$, whose structure is characterized in the following lemma.

\begin{lemma}{\textnormal{\cite[Lemma 1 \& 2]{sun_2024_stealth}}}
\label{Lem:structureH}
The matrix $\Hm$ is given by 
\begin{IEEEeqnarray}{c}\label{Equ:add1}
\Hm = \sum_{e_{ij} \in \Ec } \! b_{ij} \Hm_{e_{ij}},
\end{IEEEeqnarray}
where $\Hm_{e_{ij}}$ is a matrix that only the $i$-th and $j$-th columns are nonzero vectors with relation $\hv_{e_{ij}}^{i} = -\hv_{e_{ij}}^{j}$, and $\hv_{e_{ij}}^{i}$ has only four nonzero elements and is given by 
\begin{IEEEeqnarray}{c}
\begin{array}{lllll}
\hv_{e_{ij}}^{i} = [ \ldots, &1, \ldots, &-1, \ldots, &1,\ldots,& -1, \ldots]^{\sf T}\\
        &\mbox{\footnotesize i-th}& \mbox{\footnotesize j-th} &\mbox{\footnotesize (n+1+k)-th}            &\mbox{\footnotesize (n+1+l+k)-th}
\end{array}
\Tsupersqueezeequ
\IEEEeqnarraynumspace
\end{IEEEeqnarray}
with $k$ denoting the index of branch $e_{ij}$. 
Furthermore, it holds that $\operatorname{rank}(\Hm) = n$. 
\end{lemma}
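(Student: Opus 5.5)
The plan is to write down $\Hm$ directly from the DC power-flow model and then read off both the decomposition and the rank. Index the measurements as follows:
\begin{itemize}
\item the first $n+1$ rows are the bus injections $P_i$;
\item the next $l$ rows are the forward flows $P_{ij}$ on branch $e_k=e_{ij}$;
\item the last $l$ rows are the reverse flows $P_{ji}$.
\end{itemize}
With all $n+1$ angles $\theta$ as variables, the linearized relations are $P_{ij}=b_{ij}(\theta_i-\theta_j)$, $P_{ji}=-P_{ij}$ and $P_i=\sum_{j:e_{ij}\in\Ec} b_{ij}(\theta_i-\theta_j)$. Each measurement is therefore a sum over branches of terms $b_{ij}(\theta_i-\theta_j)$, each with a sign.

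For the decomposition \eqref{Equ:add1}, I would group the entries of $\Hm$ by branch. Branch $e_{ij}$ (index $k$) contributes to exactly four rows:
\begin{itemize}
\item injection row $i$, with $+b_{ij}(\theta_i-\theta_j)$;
\item injection row $j$, with $-b_{ij}(\theta_i-\theta_j)$;
\item flow row $n+1+k$, with $+b_{ij}(\theta_i-\theta_j)$;
\item flow row $n+1+l+k$, with $-b_{ij}(\theta_i-\theta_j)$.
\end{itemize}
All four contributions depend on $\theta$ only through $\theta_i-\theta_j$. Hence the matrix $\Hm_{e_{ij}}$ collecting them has nonzero columns only at $i$ and $j$, with $\hv^j_{e_{ij}}=-\hv^i_{e_{ij}}$. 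Column $i$ equals $+1$ at rows $i$ and $n+1+k$ and $-1$ at rows $j$ and $n+1+l+k$, which is exactly the stated vector. Summing over branches gives $\Hm=\sum b_{ij}\Hm_{e_{ij}}$, because every entry of $\Hm$ is accounted for exactly once by linearity.

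For the rank, I would work with the flow rows. The $l$ forward-flow rows form $\mathbf{D}\mathbf{A}$, where $\mathbf{A}\in\RR^{l\times(n+1)}$ is the oriented incidence matrix of $\Gc$ and $\mathbf{D}=\mathrm{diag}(b_{ij})$ with all $b_{ij}\neq 0$. The remaining rows of $\Hm$ are linear combinations of these rows: the injection block is $\mathbf{A}^{\sf T}\mathbf{D}\mathbf{A}$, and the reverse-flow block is $-\mathbf{D}\mathbf{A}$. Therefore $\operatorname{rank}(\Hm)=\operatorname{rank}(\mathbf{A})$. Since $\Gc$ is connected, the standard fact gives $\operatorname{rank}(\mathbf{A})=(n+1)-1=n$.

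For completeness I would justify that fact in two directions:
\begin{itemize}
\item \emph{Kernel contains $\mathbf{1}$.} Every row of $\mathbf{A}$ sums to zero, so $\mathbf{1}\in\ker\mathbf{A}$.
\item \emph{Kernel is exactly $\operatorname{span}(\mathbf{1})$.} If $\mathbf{A}\xv=\zerov$, then $x_i=x_j$ across every edge, so $\xv$ is constant by connectivity.
\end{itemize}

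The only delicate points are bookkeeping. The first is making sure the paper's indexing convention, with forward flows at $n+1+k$ and reverse flows at $n+1+l+k$, matches the sign pattern. The second is the assumption that all $b_{ij}\neq 0$, which holds for physical branches. I expect the rank step to be the main point worth stating carefully, and it reduces to the graph-connectivity argument above.
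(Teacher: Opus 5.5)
Your proof is correct and self-contained. The paper gives no proof of this lemma of its own; it imports the result from \cite{sun_2024_stealth}. The only part it re-derives, in the proof of Lemma~\ref{Lem:comH}, is the zero column-sum identity $\sum_{k}\hv^{k}=\zerov$, read off from $\hv^{i}_{e_{ij}}=-\hv^{j}_{e_{ij}}$. On its own, that identity gives only $\operatorname{rank}(\Hm)\le n$. Your factorization
\[
\Hm=\begin{bmatrix}\mathbf{A}^{\sf T}\\ \Id\\ -\Id\end{bmatrix}\mathbf{D}\,\mathbf{A}
\]
delivers the branch decomposition, the sign pattern, and that identity ($\mathbf{A}\mathbf{1}=\zerov$) in one step. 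More importantly, it supplies the matching lower bound, and that is exactly where the two hypotheses you flag enter: connectivity of $\Gc$ and $b_{ij}\neq 0$. The paper's later steps rely on this lower bound in two places. First, it gives full column rank of $\bar{\Hm}$, and hence uniqueness of the WLS estimate. Second, the proof of Lemma~\ref{Lem:ObjFunMax} uses $\operatorname{rank}(\Hm^{\prime})=n$. Your argument covers the latter verbatim, with $\mathbf{D}$ replaced by $\mathrm{diag}(\delta_{ij}b_{ij})$, provided no modified admittance is driven to zero, i.e.\ $\delta_{ij}\neq 0$; the paper leaves this condition implicit.
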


Lemma \ref{Lem:structureH} shows that matrix $\Hm$ can be decomposed into a weighted sum of some base matrices, where each base matrix corresponds to a specific branch $e_{ij}$ and has a sparse structure. 
However, the rank of $\Hm$ is $n$, which implies that it is a rank-deficient matrix. 
Conversely, the matrix $\bar{\Hm}$ is a full-rank matrix, since it is formed by removing a column from $\Hm$. 
Note that $\Hm$ can also be considered as a Jacobian matrix. 
Specifically, under the full Jacobian matrix $\Hm$, the state variables represent the absolute voltage angles at all $n+1$ buses; 
whereas under the reduced Jacobian matrix $\bar{\Hm}$, the state variables represent the voltage angle differences between a chosen reference bus and each of the other $n$ buses. 
To distinguish them, $\Hm$ is referred as ``{\it full Jacobian matrix}'' and $\bar{\Hm}$ is referred as ``{\it reduced Jacobian matrix}'' in the remaining of the paper.

A direct consequence of the matrix decomposition in Lemma \ref{Lem:structureH} is that each column of the full Jacobian matrix $\Hm$ can also be decomposed, which is stated in the following lemma. 
\begin{lemma}
\label{Lem:colH}
The $i$-th column $\hv^{i}$ of the full Jacobian matrix $\Hm$ is given by
\begin{IEEEeqnarray}{c}
\label{Equ:colH}
\hv^{i} = \!\! \sum_{e_{ij} \in \Ec_{v_i}} \!\!\! b_{ij} \hv_{e_{ij}}^{i}.
\end{IEEEeqnarray}
\end{lemma}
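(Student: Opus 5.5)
The plan is to read off the $i$-th column on both sides of the decomposition in Lemma \ref{Lem:structureH}. Taking the $i$-th column is a linear operation, so it commutes with the weighted sum. From \eqref{Equ:add1} we therefore get
\begin{IEEEeqnarray}{c}
\hv^{i} = \sum_{e_{jk} \in \Ec} b_{jk} \bigl(\Hm_{e_{jk}}\bigr)^{i},
\end{IEEEeqnarray}
where $(\Hm_{e_{jk}})^{i}$ denotes the $i$-th column of the base matrix attached to branch $e_{jk}$.

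Next we use the sparsity stated in Lemma \ref{Lem:structureH}. For each branch, only the two columns indexed by its end buses are nonzero. So if $v_i$ is not an endpoint of $e_{jk}$, that branch's $i$-th column is the zero vector and its term drops out. The surviving terms are exactly the branches incident to $v_i$, i.e.\ the set $\Ec_{v_i}$.

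For a surviving branch, written as $e_{ij}$ with $v_i$ as one endpoint, the $i$-th column of $\Hm_{e_{ij}}$ is by definition $\hv_{e_{ij}}^{i}$. This gives \eqref{Equ:colH} directly.

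The only point that needs care is orientation. A branch may be labelled $e_{ji}$ rather than $e_{ij}$, and Lemma \ref{Lem:structureH} gives the relation $\hv_{e_{ij}}^{i} = -\hv_{e_{ij}}^{j}$. We need to confirm the convention that $\hv_{e_{ij}}^{i}$ always denotes the column at position $i$ of that branch's base matrix, whichever way the branch is labelled. Under that convention, $b_{ij}$ is the same admittance regardless of labelling, and the identity holds with no sign ambiguity. Apart from this bookkeeping, the argument is immediate.
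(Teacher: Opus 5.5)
Your proposal is correct and matches the paper's proof. Both take the $i$-th column of the decomposition $\Hm=\sum_{e\in\Ec} b_e \Hm_e$ and drop the branches not incident to $v_i$, whose $i$-th columns vanish by the sparsity in Lemma~\ref{Lem:structureH}; your orientation remark (reading $\hv_{e_{ij}}^{i}$ as the $i$-th column of that branch's fixed base matrix) is harmless bookkeeping that the paper leaves implicit.
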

\begin{proof}
The proof is presented in Appendix \ref{Pro:Lem:colH}.
\end{proof}

\subsection{State Estimation and Abnormality Detection}
State estimation refers to the process of estimating the state variables from noisy measurements, for which the WLS estimator is commonly adopted. 
The closed-form expression of the WLS estimator is given by 
\begin{IEEEeqnarray}{rCl}
\hat{\xv} =
 \arg \  \min_{\ev \in \RR^{n}} \| \Wm \left(\yv - \bar{\Hm} \ev \right)\|_{2}^2 =\left( \bar{\Hm}^{\sf T} \Wm \bar{\Hm} \right)^{-\!1}\bar{\Hm}^{\sf T}\Wm \yv, \IEEEeqnarraynumspace \squeezeequ \label{Equ:WLS0} 
\end{IEEEeqnarray}
in which $\Wm$ is a diagonal weighting matrix whose entries are typically the inverse of the noise variances \cite[(8.16)]{kay_1993_fundamentals}. 
The corresponding estimation residual $r(\yv)$ is calculated as
\begin{IEEEeqnarray}{c}\label{Equ:res}
r(\yv) =  \| \yv - \bar{\Hm} \hat{\xv} \|_{2}^2.
\end{IEEEeqnarray}
The full-rank property of the reduced Jacobian matrix $\bar{\Hm}$ guarantees that $\hat{\xv} $ in \eqref{Equ:WLS0} is unique for a given measurement vector $\yv$. 

A conventional method for detecting system anomalies is the BDD mechanism.
This procedure is a hypothesis testing problem that compares the estimation residual $r(\yv)$ against a predetermined threshold $\tau$, i.e. 
\begin{IEEEeqnarray}{c}
r(\yv) \LRT{\Hc_1}{\Hc_0} \tau, \label{Equ:ResTest}
\end{IEEEeqnarray}
in which the null hypothesis $\Hc_0$ states that no bad data is present, the alternative hypothesis $\Hc_1$ states that bad data is present, and $\tau$ is the threshold balancing the probabilities of false alarm and detection \cite{abur_power_2004}.

\subsection{False Data Injection Attacks}
FDI attacks disrupt the power system operation by introducing malicious data into acquired measurements.
For the observation model in \eqref{Equ:LOM}, the measurement model under attacks is given by 
\vspace{-0.5em}
\begin{IEEEeqnarray}{c}
\yv_a = \bar{\Hm} \xv + \zv +\av, 
\end{IEEEeqnarray}
where $\yv_a \in \RR^{m} $ is the compromised measurement vector and $\av \in \RR^{m}$ is the attack vector.
A stealth attack is one that achieves the dual objectives of (i) deviating the estimated state from the original value, while (ii) evading the BDD.
It is shown in \cite{liu_false_2011} that the attack vector that lies within the column space of $\bar{\Hm}$, i.e. $\av \in \operatorname{col}(\bar{\Hm})$, is a stealth attack. 

\section{Problem Formulation}
\subsection{Problem Statement}
MTD is a proactive strategy that boosts cyber resilience by dynamically changing system parameters, depriving attackers of the critical information needed for stealth attacks. 
This study focuses on an MTD approach that uses D-FACTS devices to alter branch admittances, rendering the attacker's knowledge of the Jacobian matrix inaccurate \cite{rogers_2008_some}.


Initially, both the operator and attacker know the original reduced Jacobian matrix $\bar{\Hm}$.
Subsequently, the operator modifies the admittance of some selected admittances, resulting in an updated Jacobian matrix $\bar{\Hm}^\prime \in \RR^{m \times n}$.
However, the attacker is assumed to be unaware of this, and continues to construct the attack vector as $\av = \bar{\Hm}\cv$, based on the outdated Jacobian matrix. 
Consequently, if this vector does not lie in the new column space (i.e., $\av \notin \operatorname{col}(\bar{\Hm}^\prime)$), the attack becomes detectable.

Hence, the effectiveness of the MTD strategy is measured by the reduction of the {\it stealthy attack space} $\Sc_a$, which is defined as the intersection of the column spaces before and after the MTD alteration, i.e. 
\begin{IEEEeqnarray}{c}
\Sc_a = \operatorname{col}(\bar{\Hm}) \cap \operatorname{col}(\bar{\Hm}^{\prime}).
\end{IEEEeqnarray}
Accordingly, the objective of the operator is find an MTD strategy (or equivalently, a modified Jacobian matrix $\bar{\Hm}^{\prime}$) to minimize the dimension of the stealthy attack space, i.e.
\begin{IEEEeqnarray}{c}
\label{Equ:ObjFunMin}
\min_{\bar{\Hm}^{\prime}} \ \dim (\Sc_a).
\end{IEEEeqnarray}

\subsection{Reformulation and Methodological Considerations}
\label{SubSec:RMC}
The optimization problem in \eqref{Equ:ObjFunMin} is in general intractable. 
Although the reduced Jacobian matrix $\bar{\Hm}$ is a full-rank matrix, it lacks the structure regularity of the full Jacobian $\Hm$, see Lemma \ref{Lem:structureH}.
To facilitate a more tractable formulation, the following lemma proves that the dimension of the stealthy attack space is invariant whether the analysis is conducted using the full Jacobian matrix $\Hm$ or the reduced Jacobian matrix $\bar{\Hm}$.

\begin{lemma}
\label{Lem:comH}
Let $\Hm^\prime$ and $\bar{\Hm}^\prime$ be the post-MTD Jacobian matrix of $\Hm$ and $\bar{\Hm}$, respectively. 
For the same MTD strategy, it holds that
\begin{IEEEeqnarray}{c}\label{Equ:comH}
\operatorname{col}(\bar{\Hm}) \cap \operatorname{col}(\bar{\Hm}^{\prime})
=
\operatorname{col}(\Hm) \cap \operatorname{col}(\Hm^{\prime}).
\end{IEEEeqnarray}
\end{lemma}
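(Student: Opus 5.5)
The plan is to prove the stronger statement that $\operatorname{col}(\bar{\Hm}) = \operatorname{col}(\Hm)$ and $\operatorname{col}(\bar{\Hm}^{\prime}) = \operatorname{col}(\Hm^{\prime})$. The identity \eqref{Equ:comH} then follows immediately by intersecting both sides.

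\emph{Step 1: the pre-MTD matrices.} By Lemma \ref{Lem:structureH}, every basis matrix $\Hm_{e_{ij}}$ has exactly two nonzero columns, and they satisfy $\hv_{e_{ij}}^{i} = -\hv_{e_{ij}}^{j}$. Multiplying by the all-ones vector $\mathbf{1}\in\RR^{n+1}$ therefore gives $\Hm_{e_{ij}}\mathbf{1}=\zerov$. Summing over branches with the weights $b_{ij}$ in \eqref{Equ:add1} yields $\Hm\mathbf{1}=\zerov$. Equivalently, via Lemma \ref{Lem:colH}, $\sum_{i}\hv^{i}=\zerov$. Hence the column of the reference bus, say $\hv^{r}$, equals $-\sum_{i\neq r}\hv^{i}$ and lies in the span of the remaining columns. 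The matrix $\bar{\Hm}$ is exactly $\Hm$ with column $r$ deleted. So $\operatorname{col}(\bar{\Hm})\subseteq\operatorname{col}(\Hm)$ trivially, and the reverse inclusion holds because the deleted column is redundant. This gives $\operatorname{col}(\bar{\Hm})=\operatorname{col}(\Hm)$.

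\emph{Step 2: the post-MTD matrices.} The MTD only changes the admittance values $b_{ij}$ of some branches. The topology and the basis matrices $\Hm_{e_{ij}}$ stay the same. So the post-MTD full matrix is $\Hm^{\prime}=\sum_{e_{ij}\in\Ec} b^{\prime}_{ij}\Hm_{e_{ij}}$, which has the same form as \eqref{Equ:add1} with new weights. The post-MTD reduced matrix $\bar{\Hm}^{\prime}$ is obtained by deleting the same reference column from $\Hm^{\prime}$. The argument of Step 1 used only the structural relation $\hv_{e_{ij}}^{i}=-\hv_{e_{ij}}^{j}$, not the values of the weights. It therefore applies verbatim: $\Hm^{\prime}\mathbf{1}=\zerov$, and so $\operatorname{col}(\bar{\Hm}^{\prime})=\operatorname{col}(\Hm^{\prime})$. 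Intersecting the two equalities completes the proof.

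The main point needing care is justifying that the same reference column is removed before and after MTD. Since the MTD changes only admittances, not bus labeling, this holds by construction. Beyond this, the argument should be routine, and no rank computation is needed. The fact $\operatorname{rank}(\Hm)=n$ is consistent with the result but is not required.
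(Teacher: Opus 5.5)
Your proof is correct and follows essentially the same route as the paper. Both arguments use only $\hv_{e_{ij}}^{i}=-\hv_{e_{ij}}^{j}$, not the admittance values, to show that the columns of $\Hm$ and of $\Hm^{\prime}$ sum to zero, hence $\operatorname{col}(\bar{\Hm})=\operatorname{col}(\Hm)$ and $\operatorname{col}(\bar{\Hm}^{\prime})=\operatorname{col}(\Hm^{\prime})$, and then intersect these equalities; your explicit remark that the same reference column is deleted before and after MTD is a detail the paper leaves implicit.
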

\begin{proof}
The proof is presented in Appendix \ref{Pro:Lem:comH}.
\end{proof}

Note that the same MTD strategy implies the set of branches whose admittance are changed and the changed admittance values are all the same. 
Lemma \ref{Lem:comH} establishes that, for the given  optimization problem, analyzing the reduced Jacobian matrix $\bar{\Hm}$ is equivalent to analyzing the full Jacobian matrix $\Hm$. 
Consequently, {\it the full Jacobian matrix $\Hm$ is used for all subsequent analysis}, as its structural properties are more amenable to theoretical investigation. 
Specifically, the impact of the reference bus can be ignored, and it is unnecessary to consider whether a branch is connected to the reference bus, which simplifies the following
analysis.


Furthermore, the detectability of the attacks does not change when either Jacobian matrix is considered.
Under WLS-based BDD, attack vector $\av$ is undetectable if $\av \in \operatorname{col}(\bar{\Hm}^{\prime})$, since such vectors leave the measurement residual $\rv = (\Id - \Km)\yv$ unchanged, where $\Km = \bar{\Hm}^{\prime}(\bar{\Hm}^{\prime {\sf T}} \Wm \bar{\Hm}^{\prime})^{-\!1}\bar{\Hm}^{\prime {\sf T}} \Wm$ is the oblique projection onto $\operatorname{col}(\bar{\Hm}^{\prime})$ \cite[Ch.~2]{abur_power_2004}.
Moreover, since $\operatorname{col}(\bar{\Hm}^{\prime}) = \operatorname{col}(\Hm^{\prime})$ (see the proof of Lemma \ref{Lem:comH}), the WLS detectability criterion is identical whether formulated via $\Hm$ or $\bar{\Hm}$.

\begin{lemma}
\label{Lem:deltaH}
Let $\Ec_{m}$ be the set of branches whose admittances are modified by MTD.
And let $\delta_{ij} = b_{ij}^\prime / b_{ij}$ be the ratio of this modification for branch $e_{ij}$ with $b_{ij}^\prime$ denoting the post-MTD admittance of branch  $e_{ij}$.
The mismatch between the post- and pre-MTD Jacobian matrix is given by 
\begin{IEEEeqnarray}{c}\label{Equ:add0}
\Delta \Hm = \Hm^{\prime} - \Hm = \!\! \sum_{e_{ij} \in \Ec_{m}} \!\!\! (\delta_{ij} \! - \! 1 ) b_{ij} \Hm_{e_{ij}}.
\end{IEEEeqnarray}
\end{lemma}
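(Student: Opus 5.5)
The plan is to apply the linear decomposition of Lemma~\ref{Lem:structureH} twice: once to the pre-MTD network and once to the post-MTD network. The key point is that the base matrices $\Hm_{e_{ij}}$ depend only on the topology $\Gc$ and the branch indexing, not on the admittances. This follows from their explicit form in Lemma~\ref{Lem:structureH}, whose nonzero entries are $\pm 1$ placed according to the endpoints $i,j$ and the branch index $k$. Since changing admittances with D-FACTS devices leaves $\Vc$, $\Ec$, the branch ordering and the measurement configuration unchanged, the same family $\{\Hm_{e_{ij}}\}_{e_{ij}\in\Ec}$ appears in both decompositions.

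First, I will write $\Hm = \sum_{e_{ij}\in\Ec} b_{ij}\Hm_{e_{ij}}$. By the same lemma applied to the post-MTD admittances, $\Hm^\prime = \sum_{e_{ij}\in\Ec} b^\prime_{ij}\Hm_{e_{ij}}$.

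Second, I will split the edge set as $\Ec = \Ec_m \cup (\Ec\setminus\Ec_m)$. For $e_{ij}\notin\Ec_m$ we have $b^\prime_{ij}=b_{ij}$, so those terms cancel in $\Hm^\prime-\Hm$. This gives
\begin{IEEEeqnarray}{c}
\Delta\Hm = \sum_{e_{ij}\in\Ec_m} (b^\prime_{ij}-b_{ij})\Hm_{e_{ij}} .
\end{IEEEeqnarray}

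Third, I will substitute $b^\prime_{ij}=\delta_{ij}b_{ij}$, which is well defined because $b_{ij}\neq 0$ for every physical branch. This turns each coefficient into $(\delta_{ij}-1)b_{ij}$, which is exactly \eqref{Equ:add0}.

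The only step needing care is justifying that the $\Hm_{e_{ij}}$ are admittance-independent. I would handle it by pointing directly to the explicit entries in Lemma~\ref{Lem:structureH}, and by noting that MTD as defined in the problem statement alters only admittance values, not topology or measurement placement. The rest is routine linear algebra.
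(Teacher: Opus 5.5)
Your proposal is correct and follows essentially the same route as the paper: apply Lemma~\ref{Lem:structureH} to $\Hm^{\prime}$ with coefficients $\delta_{ij}b_{ij}$ on $\Ec_m$ and $b_{ij}$ on $\Ec\setminus\Ec_m$, then cancel against $\Hm$ (the paper does this by adding and subtracting $\sum_{e_{ij}\in\Ec_m} b_{ij}\Hm_{e_{ij}}$). Your explicit justification that the base matrices $\Hm_{e_{ij}}$ do not depend on the admittances is left implicit in the paper, so stating it makes the argument more complete without changing its structure.
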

\begin{proof}
The proof is provided in Appendix \ref{Pro:Lem:deltaH}.
\end{proof}

In the remaining of this paper, we assume that $\delta_{ij} \neq 1, \forall e_{ij} \in \Ec_{m}$, which implies that the admittance of any branch included in $\Ec_m$ is altered by the MTD strategy, i.e. its post-MTD value differs from the original.
Using the result in Lemma \ref{Lem:deltaH}, the stealthy attack space minimization problem is reformulated into an equivalent rank-maximization problem. 
\begin{lemma}
\label{Lem:ObjFunMax}
The dimension of the stealthy attack space is
\begin{IEEEeqnarray}{c}
\dim(\Sc_a) = 2n - \operatorname{rank} \left( \begin{bmatrix} \Hm \mid \Delta \Hm \end{bmatrix} \right).
\end{IEEEeqnarray}
As a result, the optimization problem in \eqref{Equ:ObjFunMin} is equivalent to 
\begin{IEEEeqnarray}{c}
\label{Equ:OptPro}
\max_{\Delta \Hm} \
\operatorname{rank}
\left(
\begin{bmatrix}
\Hm \mid \Delta \Hm
\end{bmatrix}
\right).
\end{IEEEeqnarray}
\end{lemma}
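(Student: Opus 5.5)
We will prove the dimension formula by combining Lemma \ref{Lem:comH} with the standard dimension identity for intersections of subspaces. By Lemma \ref{Lem:comH}, $\Sc_a = \operatorname{col}(\Hm) \cap \operatorname{col}(\Hm^\prime)$, so we may work entirely with the full Jacobian matrices. The identity
\begin{IEEEeqnarray*}{c}
\dim(\Sc_a) = \dim\operatorname{col}(\Hm) + \dim\operatorname{col}(\Hm^\prime) - \dim\big(\operatorname{col}(\Hm) + \operatorname{col}(\Hm^\prime)\big)
\end{IEEEeqnarray*}
reduces the claim to three facts: $\operatorname{rank}(\Hm)=n$, $\operatorname{rank}(\Hm^\prime)=n$, and $\operatorname{col}(\Hm)+\operatorname{col}(\Hm^\prime)=\operatorname{col}([\Hm \mid \Delta\Hm])$.

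The first fact is given directly by Lemma \ref{Lem:structureH}. For the second, we note that $\Hm^\prime$ is again a full Jacobian of the same graph $\Gc$, with admittances $b^\prime_{ij}=\delta_{ij}b_{ij}$ on $\Ec_m$ and $b_{ij}$ elsewhere. Lemma \ref{Lem:structureH} then applies verbatim, provided the post-MTD admittances stay nonzero; this is implicit in the MTD model, since a branch is only retuned, not removed. Hence $\operatorname{rank}(\Hm^\prime)=n$. Alternatively, the proof of Lemma \ref{Lem:comH} already shows $\operatorname{col}(\bar{\Hm}^\prime)=\operatorname{col}(\Hm^\prime)$, and $\bar{\Hm}^\prime$ has full column rank $n$. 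For the third fact, $\Delta\Hm=\Hm^\prime-\Hm$ by Lemma \ref{Lem:deltaH}, so the columns of $[\Hm\mid\Hm^\prime]$ and $[\Hm\mid\Delta\Hm]$ span the same space. Concretely, the two matrices are related by an invertible block column operation $\begin{bmatrix}\Id & -\Id\\ \zerov & \Id\end{bmatrix}$, which preserves the column space. Substituting the three facts gives $\dim(\Sc_a)=2n-\operatorname{rank}([\Hm\mid\Delta\Hm])$.

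The equivalence of the optimization problems then follows immediately. Since $n$ is fixed by the topology and independent of the MTD choice, minimizing $\dim(\Sc_a)$ over $\bar{\Hm}^\prime$ is the same as maximizing $\operatorname{rank}([\Hm\mid\Delta\Hm])$. Moreover, $\bar{\Hm}^\prime$, $\Hm^\prime$ and $\Delta\Hm$ are in one-to-one correspondence through the same MTD strategy $(\Ec_m,\{\delta_{ij}\})$ by Lemma \ref{Lem:deltaH}.

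The only delicate step is justifying $\operatorname{rank}(\Hm^\prime)=n$. This requires checking that the hypotheses of Lemma \ref{Lem:structureH} still hold after modification, namely a connected graph with nonzero branch admittances. We will state the nonzero post-MTD admittance assumption explicitly, or simply invoke the equality $\operatorname{col}(\bar{\Hm}^\prime)=\operatorname{col}(\Hm^\prime)$ together with the full column rank of the reduced post-MTD Jacobian, which the paper already assumes for WLS to be well-defined.
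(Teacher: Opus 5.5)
Your proposal is correct and follows essentially the same route as the paper: pass to the full Jacobians via Lemma~\ref{Lem:comH}, apply the rank identity for the intersection of two column spaces, use $\operatorname{rank}(\Hm)=\operatorname{rank}(\Hm^\prime)=n$, and turn $[\Hm \mid \Hm^\prime]$ into $[\Hm \mid \Delta\Hm]$ by a rank-preserving column operation. Your explicit note that $\operatorname{rank}(\Hm^\prime)=n$ needs nonzero post-MTD admittances, i.e.\ $\delta_{ij}\neq 0$, makes explicit a hypothesis that the paper's proof uses without stating.
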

\begin{proof}
The proof is provided in Appendix \ref{Pro:OptPro}.
\end{proof}

Lemma \ref{Lem:ObjFunMax} shows that minimizing the dimension of the stealthy attack space is equivalent to maximizing the rank of the augmented matrix $\left[ \Hm \mid \Delta \Hm \right]$. 
Furthermore, from Lemma \ref{Lem:deltaH}, it can be found that $\Delta \Hm$ is composed of $|\Ec_m|$ matrices, but $\Hm^{\prime}$ is composed of $|\Ec|$ matrices. 
To that end, considering $\Delta \Hm$ simplifies the characterization of the effectiveness of MTD. 

However, only optimizing the objective function in \eqref{Equ:OptPro} is insufficient, in which the number of branches whose admittances are modified must also be taken into account.  
If two MTD strategies achieve the same value of $\operatorname{rank} \left(\left[ \Hm \mid \Delta \Hm \right]\right)$, the one with a smaller value of $|\Ec_{m}|$ is preferable, as it requires fewer D-FACTS devices.
To incorporate this, the optimization problem in \eqref{Equ:OptPro} is further developed into the bi-objective optimization problem
\begin{IEEEeqnarray}{c}
\label{Equ:BiObjectiveOpt}
 \max_{\Ec_m \subseteq \Ec, \  \bm{\delta} \in \{\mathbb{R}^{|\Ec_m|}/\zerov \}}  \left( \operatorname{rank}\left( \begin{bmatrix} \Hm \mid \Delta \Hm \end{bmatrix} \right), -|\Ec_m| \right), \ 
 \textnormal{s.t.}  \ \eqref{Equ:add0}, \ \IEEEeqnarraynumspace \supersqueezeequ 
\end{IEEEeqnarray}
in which $\bm{\delta} = [ \ldots, \delta_{ij}, \ldots]^{\sf T}$ is the vector of branch admittance modification ratios defined in Lemma \ref{Lem:deltaH}.

\section{Theoretical Characterization}

As a combinatorial subset selection problem, the optimization in \eqref{Equ:BiObjectiveOpt} is intrinsically NP-hard \cite{liu_2021_optimal} and computationally challenging to solve. 
Moreover, previous studies indicate that protecting certain branches fails to enhance MTD effectiveness \cite{zhang_2020_analysis,liu_reactance_2018}.
To that end, we aim to provide theoretical support for such situations in this section, which is crucial for minimizing $|\Ec_m|$ in \eqref{Equ:BiObjectiveOpt} without decreasing $\operatorname{rank}\left( \begin{bmatrix} \Hm \mid \Delta \Hm \end{bmatrix} \right)$. 

The following theorem establishes that modifying the admittance of branches connected to degree-one (leaf) buses is a redundant action \emph{for the purpose of rank maximization}\footnote{The redundancy here is characterized with respect to rank maximization only; once an actuation-cost function is included in the objective, this redundancy may no longer apply, since a leaf-incident branch may still be operationally preferred when its actuation cost is materially lower than that of interior branches.}.
\begin{theorem}
\label{Theo:deg1}
Let $\Ec_{leaf}$ be the set of all branches incident to degree-one buses. 
Then for any set $\Ec_m \cup \Ec_{add}$ with $\Ec_m$ in Lemma \ref{Lem:deltaH} and $\Ec_{add} \subseteq \Ec_{leaf}$, it holds that 
\begin{IEEEeqnarray}{c} 
\operatorname{rank} 
\left( 
\begin{bmatrix} 
\Hm \mid \Delta \Hm^{\prime} 
\end{bmatrix} 
\right)
=
\operatorname{rank}
\left(
\begin{bmatrix}
\Hm \mid \Delta \Hm
\end{bmatrix}
\right),
\end{IEEEeqnarray}
where $\Delta\Hm$ and $\Delta \Hm^{\prime} $ are the mismatch between the post- and pre-MTD Jacobian matrix when $\Ec_m$ and $\Ec_m \cup \Ec_{add}$ are protected by MTD, respectively. 
\end{theorem}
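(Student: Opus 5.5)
The plan is to show that, for each leaf branch $e_{ij}\in\Ec_{add}$, every column of the extra block $(\delta_{ij}-1)b_{ij}\Hm_{e_{ij}}$ already lies in $\operatorname{col}(\Hm)$. Since $\Delta\Hm^{\prime} = \Delta\Hm + \sum_{e_{ij}\in\Ec_{add}\setminus\Ec_m}(\delta_{ij}-1)b_{ij}\Hm_{e_{ij}}$ by Lemma \ref{Lem:deltaH}, this will give
\begin{IEEEeqnarray}{c}
\operatorname{col}\left(\begin{bmatrix}\Hm \mid \Delta\Hm^{\prime}\end{bmatrix}\right)=\operatorname{col}\left(\begin{bmatrix}\Hm \mid \Delta\Hm\end{bmatrix}\right),\nonumber
\end{IEEEeqnarray}
and hence equal ranks. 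Branches in $\Ec_{add}\cap\Ec_m$ contribute nothing new and can be discarded. If some $\delta_{ij}$ with $e_{ij}\in\Ec_m\cap\Ec_{add}$ is re-chosen, the same argument applies to the difference of the two perturbations on that branch.

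The key observation uses Lemma \ref{Lem:colH}. Let $e_{ij}$ be a branch with $\deg(v_i)=1$. Then $\Ec_{v_i}=\{e_{ij}\}$, so $\hv^{i}=b_{ij}\hv^{i}_{e_{ij}}$. By Lemma \ref{Lem:structureH}, $\Hm_{e_{ij}}$ has only two nonzero columns, namely $\hv^{i}_{e_{ij}}$ and $\hv^{j}_{e_{ij}}=-\hv^{i}_{e_{ij}}$. Both equal $\pm b_{ij}^{-1}\hv^{i}$, and $b_{ij}\neq0$ since the branch is present. Hence every column of $\Hm_{e_{ij}}$ lies in $\operatorname{span}\{\hv^{i}\}\subseteq\operatorname{col}(\Hm)$.

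Next I sum over branches. Each column of $\Delta\Hm^{\prime}$ is the corresponding column of $\Delta\Hm$ plus a linear combination of columns of the leaf-branch matrices $\Hm_{e_{ij}}$, $e_{ij}\in\Ec_{add}$. Such combinations lie in $\operatorname{col}(\Hm)$, so column operations using the columns of $\Hm$ turn $\begin{bmatrix}\Hm\mid\Delta\Hm^{\prime}\end{bmatrix}$ into $\begin{bmatrix}\Hm\mid\Delta\Hm\end{bmatrix}$. Column operations preserve rank, which completes the argument.

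The main subtlety is a degenerate case: a branch both of whose endpoints have degree one, which happens only when the network is a single edge. The argument still works there, because either endpoint column serves. One must also make sure the argument uses the \emph{original} $\hv^i$ from $\Hm$, not from $\Hm^\prime$. It does, since $\Hm$ is unchanged in the augmented matrix. No assumption on the other modified branches is needed, because the inclusion $\operatorname{col}(\Hm_{e_{ij}})\subseteq\operatorname{col}(\Hm)$ is independent of $\Ec_m$.
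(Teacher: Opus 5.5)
Your proposal is correct and follows essentially the same route as the paper's proof. Both arguments use Lemma~\ref{Lem:colH} to get $\hv^{i} = b_{ij}\hv^{i}_{e_{ij}}$ at the degree-one bus $v_i$, and both observe that the two nonzero columns of $(\delta_{ij}-1)b_{ij}\Hm_{e_{ij}}$ are $\pm(\delta_{ij}-1)b_{ij}\hv^{i}_{e_{ij}}$, so they lie in $\operatorname{col}(\Hm)$ and appending them leaves the rank of the augmented matrix unchanged. Your explicit treatment of the overlap $\Ec_{add}\cap\Ec_m$ (which the paper's decomposition $\Delta\Hm^{\prime}=\Delta\Hm+\Delta\Hm_{add}$ glosses over) and of the single-edge degenerate case is a harmless tightening, not a different argument.
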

\begin{proof}
The proof is provided in Appendix \ref{Pro:Theo:deg1}. 
\end{proof}
Theorem \ref{Theo:deg1} demonstrates that modifying the admittance of branches connecting to leaf buses cannot increase the rank of the augmented matrix. 
As a result, to minimize $|\Ec_m|$ in \eqref{Equ:BiObjectiveOpt} without decreasing $\operatorname{rank}\left( \begin{bmatrix} \Hm \mid \Delta \Hm \end{bmatrix} \right)$, such branches should not be considered. 

This insight motivates a graph reduction procedure to systematically eliminate ineffectual branches and identify the core topology relevant to the optimization. 
Note that the graph reduction procedure is iterative, in which  removing a degree-one bus along with its incident branch may create new degree-one buses.
The final output of this procedure is a reduced core graph $\Gc^{\prime} = (\Vc^{\prime}, \Ec^{\prime})$, where each bus $v_i \in \Vc^{\prime}$ has a degree strictly greater than one and an updated set of incident branches $\Ec_{v_i}^{\prime}$. 
The following theorem applies to this reduced graph.
 
\begin{theorem}
\label{Theo:deg2}



In the reduced graph $\Gc^{\prime} = (\Vc^{\prime}, \Ec^{\prime})$, let $\Ec_{v_i}^{\prime\prime} = \{\Ec_{v_i}' \cap \Ec_m^{\sf C} \}$ be the set of branches that are incident to bus $v_i \in \Vc^{\prime}$ but not included in $\Ec_m$.
It holds that 
\begin{IEEEeqnarray}{c}
\operatorname{rank}
\left(
\begin{bmatrix}
\Hm \mid \Delta \Hm^{\prime}
\end{bmatrix} 
\right) 
- 
\operatorname{rank} 
\left( 
\begin{bmatrix} 
\Hm \mid \Delta \Hm 
\end{bmatrix} 
\right) \nonumber \\
=
\begin{cases}
\lvert \Ec_{add} \rvert, & \textnormal{if } \Ec_{add} \subset \Ec_{v_i}^{\prime\prime}, \\
\lvert \Ec_{add} \rvert - 1, & \textnormal{if } \Ec_{add} = \Ec_{v_i}^{\prime\prime},
\end{cases}
\end{IEEEeqnarray}
where $\Delta\Hm$ and $\Delta \Hm^{\prime} $ are the mismatch between the post- and pre-MTD Jacobian matrix when $\Ec_m$ and $\Ec_m \cup \Ec_{add}$ are protected by MTD, respectively. 

\end{theorem}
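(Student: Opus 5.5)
The plan is to move everything into ``edge coordinates'', where the problem becomes a statement about the incidence structure of $\Gc^{\prime}$. By Lemma \ref{Lem:structureH}, each $\Hm_{e_{ij}}$ has column space $\operatorname{span}\{\hv_{e_{ij}}^{i}\}$. The vectors $\{\hv_{e}^{i}\}_{e\in\Ec}$ are linearly independent, since the entry in row $n+1+k$ is nonzero only for branch $k$. So I will write $\hv_e$ for the (signed) basis vector of branch $e$, and represent every column of $\Hm$ and $\Delta\Hm$ in this basis of a subspace of dimension $l=|\Ec|$.

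By Lemma \ref{Lem:colH} and Lemma \ref{Lem:deltaH}, the columns are
\begin{IEEEeqnarray}{c}
\hv^{i}=\sum_{e\in\Ec_{v_i}}\pm b_e\hv_e, \qquad \Delta\hv^{i}=\sum_{e\in\Ec_{v_i}\cap\Ec_m}\pm(\delta_e-1)b_e\hv_e .
\end{IEEEeqnarray}
Let $\Sc=\operatorname{col}(\left[\Hm\mid\Delta\Hm\right])$ be the span of these vectors. Then $\operatorname{rank}(\left[\Hm\mid\Delta\Hm\right])=\dim\Sc$. By Theorem \ref{Theo:deg1}, it suffices to work in the reduced graph $\Gc^{\prime}$; leaf branches contribute nothing and can be discarded throughout.

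\emph{Step 1 (upper description of the new space).} Let $\Ec_{add}\subseteq\Ec^{\prime\prime}_{v_i}$. I will compare $\Delta\Hm^{\prime}$ with $\Delta\Hm$ column by column.

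For every other endpoint $v_j$ of some $e\in\Ec_{add}$, the branches are simple and all added branches share $v_i$. Hence the $j$-th column changes by exactly $\pm(\delta_e-1)b_e\hv_e$, and subtracting the old column (which lies in $\Sc$) shows that $\hv_e$ lies in the new span. The $i$-th column changes by a combination of these same $\hv_e$, so it adds nothing further. Hence
\begin{IEEEeqnarray}{c}
\Sc^{\prime}=\Sc+\operatorname{span}\{\hv_e: e\in\Ec_{add}\},
\end{IEEEeqnarray}
and the rank increment equals $|\Ec_{add}|-\dim\bigl(\Sc\cap\operatorname{span}\{\hv_e:e\in\Ec_{add}\}\bigr)$. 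The theorem thus reduces to showing that this intersection has dimension $0$ when $\Ec_{add}\subset\Ec^{\prime\prime}_{v_i}$ strictly, and dimension $1$ when $\Ec_{add}=\Ec^{\prime\prime}_{v_i}$.

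\emph{Step 2 (the dependency when $\Ec_{add}=\Ec^{\prime\prime}_{v_i}$).} I will first argue that, after all previous modifications, the modified branches at $v_i$ already have $\hv_e\in\Sc$. This follows by the same endpoint argument as in Step 1, applied when they were added, or by induction on the order in which branches enter $\Ec_m$. Then
\begin{IEEEeqnarray}{c}
\hv^{i}-\sum_{e\in\Ec_{v_i}\cap\Ec_m}\pm b_e\hv_e=\sum_{e\in\Ec^{\prime\prime}_{v_i}}\pm b_e\hv_e
\end{IEEEeqnarray}
is a nonzero vector of $\Sc$ lying in $\operatorname{span}\{\hv_e:e\in\Ec_{add}\}$, so the intersection has dimension at least $1$.

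\emph{Step 3 (no further dependencies).} To bound the intersection from above, I will use the orthogonal complement. In edge coordinates, $\Sc^{\perp}$ contains every flow vector $f$ that satisfies two conditions: it is orthogonal to all $\hv^{j}$, i.e.\ it is a circulation (a weighted cycle-space element of $\Gc^{\prime}$); and it vanishes on all edges whose basis vectors $\hv_e$ already lie in $\Sc$. For each proper subset $\Ec_{add}\subsetneq\Ec^{\prime\prime}_{v_i}$ and each $e\in\Ec_{add}$, the goal is to exhibit such a circulation that passes through $v_i$ using $e$ and some unmodified edge $e^{\prime}\in\Ec^{\prime\prime}_{v_i}\setminus\Ec_{add}$, and otherwise avoids $\Ec_{add}$. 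Such circulations separate each $\hv_e$ from $\Sc+\operatorname{span}\{\hv_{e''}:e''\in\Ec_{add}\setminus\{e\}\}$, which gives dimension $0$ in the proper-subset case. In the case of equality, the same construction shows that the only surviving relation is the one from Step 2, giving dimension exactly $1$.

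Constructing these circulations is where I expect the main obstacle. It requires that every vertex of $\Gc^{\prime}$ have degree at least $2$, which Theorem \ref{Theo:deg1}'s reduction guarantees, so that a closed walk leaving $v_i$ along $e$ can return along $e^{\prime}$. It must also be checked that this walk can avoid the edges already ``pinned'' in $\Sc$; if it cannot, I would fall back to a genericity argument on the ratios $\delta_e$.
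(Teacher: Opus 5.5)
\paragraph{Verdict.} Your Step 3 is where the argument breaks, and no proof can repair it: the statement is false under its stated hypotheses. The separating circulations need not exist, and the genericity fallback on the $\delta_e$ does not help.

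\paragraph{Counterexamples.} Every column of $[\Hm\mid\Delta\Hm]$ lies in the $\lvert\Ec\rvert$-dimensional span of the $\hv_e$, so the rank never exceeds $\lvert\Ec\rvert$. Let $\Gc=\Gc^{\prime}$ be a triangle on $v_1,v_2,v_3$ and take $\Ec_m=\{e_{12}\}$. Then $\hv_{e_{12}}$ is, up to scale, a column of $\Delta\Hm$. The columns $\hv^{2}$ and $\hv^{1}$ of $\Hm$ then put $\hv_{e_{23}}$ and $\hv_{e_{13}}$ in the span, so the rank is already $3=\lvert\Ec\rvert$. Yet for $v_i=v_3$ and $\Ec_{add}=\{e_{23}\}\subsetneq\Ec_{v_3}^{\prime\prime}=\{e_{23},e_{13}\}$, the theorem predicts rank $4$. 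This is exactly the obstacle you flagged: the only cycle through $e_{23}$ uses the pinned branch $e_{12}$.

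Bridges give a second failure, even with $\Ec_m=\varnothing$. Join two triangles by a branch $e$ incident to $v_i$. Summing $\hv^{k}$ over the three buses on one side of $e$ cancels the internal branches and leaves $\pm b_e\hv_e$. So $\hv_e\in\operatorname{col}(\Hm)$, and adding $\{e\}\subsetneq\Ec_{v_i}^{\prime\prime}$ raises the rank by $0$, not $1$. Adding all three branches at $v_i$ raises it by $1$, not $2$, so the equality case fails too. Minimum degree $2$ gives you closed walks, but a closed walk crosses a bridge equally often in both directions, so its flow vanishes on $e$.

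\paragraph{What your framework does prove.} Assume $\Ec_m\cup\Ec_{add}$ is a forest. Peeling leaves of the forest shows $\hv_e\in\Sc$ for $e\in\Ec_m$ and $\hv_e\in\Sc^{\prime}$ for $e\in\Ec_m\cup\Ec_{add}$. Hence $\Sc=\operatorname{col}(\Hm)+\operatorname{span}\{\hv_e:e\in\Ec_m\}$ and $\Sc^{\prime}=\Sc+\operatorname{span}\{\hv_e:e\in\Ec_{add}\}$. This is also what Step 1 silently needs: the old column you subtract must lie in $\Sc^{\prime}$, not merely in $\Sc$. In this case:
\begin{itemize}
\item the increment equals $\lvert\Ec_{add}\rvert$ minus the number of new connected components created by deleting $\Ec_{add}$ from the subgraph of unmodified branches;
\item your circulations exist exactly when each $e\in\Ec_{add}$ lies on a cycle of unmodified branches avoiding $\Ec_{add}\setminus\{e\}$.
\end{itemize}
The theorem's two cases are the situations where this count is $0$ and $1$. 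That is an extra hypothesis, not a consequence of the degree condition.

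\paragraph{Without the forest restriction.} Steps 1 and 2 fail as well. In a triangle with $\Ec_m=\{e_{12},e_{23}\}$, add $e_{13}$ with $\delta_{13}=\delta_{12}=\delta_{23}=\delta$. This gives $\Delta\Hm^{\prime}=(\delta-1)\Hm$, so the rank drops from $3$ to $2$, $\Sc\not\subseteq\Sc^{\prime}$, and no $\hv_e$ lies in $\Sc^{\prime}$.

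\paragraph{Comparison with the paper.} The paper's proof does not supply the missing idea. It asserts that the new columns $(\delta_{ij}-1)b_{ij}^{\prime}\hv^{j}_{e_{ij}}$ are independent of the columns of $\Hm_1$ solely because $\deg(v_j)>1$. That is the same unjustified step, and it fails on the examples above.
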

\begin{proof}
The proof is provided in Appendix \ref{Pro:Theo:deg2}.
\end{proof}

Theorem \ref{Theo:deg2} shows that for all branches that are incident to a bus, as long as there exists a branch whose admittance remains unmodified, further modifying the admittance of a branch exactly increases the rank by one.
However, when there is only one branch whose admittance is unmodified, further modifying the admittance of that branch does not increase the rank. 
As a result, to minimize $|\Ec_m|$ in \eqref{Equ:BiObjectiveOpt} without decreasing $\operatorname{rank}\left( \begin{bmatrix} \Hm \mid \Delta \Hm \end{bmatrix} \right)$, an optimal strategy is modifying at most $\lvert \Ec_{v_i}^\prime \rvert - 1$ branches incident to any single bus.



The following theorem considers the case in which adding a branch to the set $\Ec_{m}$ forms a cycle.


\begin{theorem}
\label{Theo:cycle}
Let ${\cal E}_{m}^\prime = {\cal E}_{m} \cup \{ e_{ij} \}$, where $e_{ij}$ forms a cycle with a subset of ${\cal E}_{m}$ in $\Gc'$. 
It holds that
\begin{IEEEeqnarray}{c}
\operatorname{rank}
\left(
\begin{bmatrix}
\Hm \mid \Delta \Hm^\prime
\end{bmatrix} 
\right) 
=
\operatorname{rank}
\left(
\begin{bmatrix}
\Hm \mid \Delta \Hm 
\end{bmatrix} 
\right) 
\end{IEEEeqnarray}
where $\Delta\Hm$ and $\Delta \Hm^{\prime} $ are the mismatch between the post- and pre- MTD Jacobian matrix when $\Ec_m$ and $\Ec_{m} \cup \{ e_{ij} \} $ are protected by MTD, respectively. 
\end{theorem}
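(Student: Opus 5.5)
The plan is to turn the rank of $\begin{bmatrix}\Hm \mid \Delta\Hm\end{bmatrix}$ into a statement about potential differences on the graph $\Gc'$, and then check that adding a cycle-closing branch $e_{ij}$ leaves that quantity unchanged.

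\textbf{Step 1 (edge-space reformulation).} For each branch $e$ let $\mathbf{g}_e := \hv^{i}_{e}$ be the vector from Lemma \ref{Lem:structureH}. Its rows $n+1+k$ and $n+1+l+k$ belong only to branch $e$, so the family $\{\mathbf{g}_e\}_{e\in\Ec}$ is linearly independent. Write $\mathbf{G}=[\ldots,\mathbf{g}_e,\ldots]$, $\mathbf{B}=\mathrm{diag}(b_e)$, and let $\mathbf{A}$ be the oriented bus--branch incidence matrix. By Lemma \ref{Lem:structureH} and Lemma \ref{Lem:colH} we get $\Hm=\mathbf{G}\mathbf{B}\mathbf{A}^{\sf T}$. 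By Lemma \ref{Lem:deltaH} we get $\Hm'=\mathbf{G}\mathbf{B}\mathbf{D}\mathbf{A}^{\sf T}$, where $\mathbf{D}=\mathrm{diag}(\delta_e)$ and $\delta_e=1$ off $\Ec_m$. Since $\mathbf{G}$ has full column rank and $\mathbf{B}$ is invertible, $\operatorname{rank}\begin{bmatrix}\Hm\mid\Delta\Hm\end{bmatrix}=\operatorname{rank}\begin{bmatrix}\Hm\mid\Hm'\end{bmatrix}$ equals $\dim(\Lc+\mathbf{D}\Lc)$. Here $\Lc=\operatorname{col}(\mathbf{A}^{\sf T})$ is the cut space, i.e.\ the space of branch potential-difference vectors. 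Because $\dim\Lc=n$, this gives
\begin{IEEEeqnarray}{c}
\operatorname{rank}\begin{bmatrix}\Hm\mid\Delta\Hm\end{bmatrix}=2n-\dim\{\mathbf{w}\in\Lc:\ \mathbf{D}\mathbf{w}\in\Lc\},\nonumber
\end{IEEEeqnarray}
which is consistent with Lemma \ref{Lem:ObjFunMax}.

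\textbf{Step 2 (compare the two solution spaces).} The theorem now reduces to showing that
\[
\Sc(\Ec_m):=\{\mathbf{w}\in\Lc:\ \mathbf{D}\mathbf{w}\in\Lc\}
\]
has the same dimension for $\Ec_m$ and for $\Ec_m'=\Ec_m\cup\{e_{ij}\}$. Membership in $\Lc$ means that the oriented sum of $\mathbf{w}$ around every cycle of $\Gc'$ vanishes. The condition $\mathbf{D}\mathbf{w}\in\Lc$ adds the same requirement for the rescaled vector. On any cycle lying entirely outside $\Ec_m$, this second requirement is implied by the first. So $\Sc(\Ec_m)$ is cut out by the cycle constraints of $\Lc$ together with one weighted constraint for each cycle that meets $\Ec_m$.

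Let $C$ be the cycle formed by $e_{ij}$ and a path $P\subseteq\Ec_m$ joining $v_i$ and $v_j$. I will organise the comparison by a cycle basis containing $C$, and aim to show that the constraint sets of $\Sc(\Ec_m)$ and $\Sc(\Ec_m')$ cut out subspaces of the same dimension. Concretely, I will exhibit a linear isomorphism $\Sc(\Ec_m)\to\Sc(\Ec_m')$ or, equivalently, show that the extra weighted relation on $C$ is matched by a relation that disappears. The intended mechanism has two parts.

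First, on $\Ec_m$ the potentials $\mathbf{w}$ and $\mathbf{D}\mathbf{w}$ are already decoupled along $P$. Second, once every branch of $C$ is modified, the potential drop of $\mathbf{D}\mathbf{w}$ across $v_i,v_j$ is determined by the drop along $P$, exactly as it was before $e_{ij}$ was modified.

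At the matrix level, the equivalent target is that the new columns $\mathbf{g}_{e_{ij}}$-contributions at $v_i$ and $v_j$ already lie in $\operatorname{col}\begin{bmatrix}\Hm\mid\Delta\Hm\end{bmatrix}$. To show this, I would write $\mathbf{g}_{e_{ij}}$ as a combination of the columns of $\Delta\Hm$ at the buses of $P$ plus columns of $\Hm$, telescoping along the path $P$.

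\textbf{Main obstacle.} The hard step is this telescoping or isomorphism argument. The weighted cycle constraint $\sum_{e\in C}\pm\delta_e w_e=0$ is not automatically implied by the unweighted one for arbitrary $\delta$. So I expect to need the structure of $\Gc'$: every bus has degree at least two, by Theorem \ref{Theo:deg1}, so the relevant buses carry unmodified branches as in Theorem \ref{Theo:deg2}. I also expect to need a genericity assumption on $\bm{\delta}$ to rule out coincidental cancellations.

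I would first try to prove the result for $C$ a fundamental cycle of a spanning forest of $\Ec_m$. The plan is to show that the rank formula depends only on the vertex set spanned by $\Ec_m$ and its connected components, i.e.\ on the graphic-matroid rank of $\Ec_m$. Adding $e_{ij}$ inside an existing component changes neither, which would give the equality.
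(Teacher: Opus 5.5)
Your Step~1 is correct and restates Lemma~\ref{Lem:ObjFunMax} cleanly: $\operatorname{rank}[\Hm\mid\Delta\Hm]=2n-\dim\Sc(\Ec_m)$. The gap is in Step~2, and it cannot be filled, because the statement is false in general, already for generic $\bm\delta$ on a four-bus graph.

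Take buses $v_1,\dots,v_4$ with branches $e_{12},e_{23},e_{34},e_{41},e_{13}$. Every degree is at least two, so $\Gc'=\Gc$ and $n=3$. Let $\Ec_m=\{e_{12},e_{23}\}$ and add $e_{13}$, which closes the cycle $v_1v_2v_3$. Orient the branches $1\to2$, $2\to3$, $3\to4$, $4\to1$, $1\to3$. Then $\Lc$ is cut out by $w_{12}+w_{23}-w_{13}=0$ and $w_{13}+w_{34}+w_{41}=0$.

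\emph{Before adding $e_{13}$:} the only weighted constraint is $(\delta_{12}-1)w_{12}+(\delta_{23}-1)w_{23}=0$. So $\dim\Sc(\Ec_m)=2$ and the rank is $4$.

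\emph{After adding $e_{13}$:} the cycle $v_1v_3v_4$, which previously avoided $\Ec_m$, acquires the constraint $(\delta_{13}-1)w_{13}=0$. Combined with the new constraint on $v_1v_2v_3$, this forces $w_{13}=0$, $w_{12}=-w_{23}$ and $(\delta_{12}-\delta_{23})w_{12}=0$. So $\dim\Sc(\Ec_m')=1$ and the rank is $5$ whenever $\delta_{12}\neq\delta_{23}$ (e.g.\ $\delta_{12}=1.1$, $\delta_{23}=0.9$, $\delta_{13}=1.2$).

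As a direct check, take the columns of $\mathbf{A}^{\sf T}$ at $v_1,v_2,v_3$ and the columns of $(\mathbf{D}'-\Id)\mathbf{A}^{\sf T}$ at $v_1,v_2$. They form a $5\times5$ matrix with determinant $\pm(\delta_{13}-1)(\delta_{12}-\delta_{23})$. Thus $\dim\Sc_a$ drops from $2$ to $1$ when the cycle-closing branch is added. Theorem~\ref{Theo:deg2} (first case, with $\Ec_{add}=\{e_{13}\}\subsetneq\Ec''_{v_1}=\{e_{41},e_{13}\}$) predicts exactly this increase of one, so the paper's two theorems conflict on this graph.

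This example shows where each part of your plan breaks.
\begin{itemize}
\item The hoped-for trade, in which the new weighted relation on $C$ is offset by one that disappears, does not occur. Modifying $e_{ij}$ imposes a weighted constraint on every cycle through $e_{ij}$, including cycles that avoided $\Ec_m$, and here nothing disappears in exchange.
\item Equivalently, your telescoping target fails: in edge coordinates $\ev_{13}\notin\Lc+\operatorname{span}(\ev_{12},\ev_{23})$. So the new columns at $v_i,v_j$ do not already lie in $\operatorname{col}[\Hm\mid\Delta\Hm]$.
\item Your fallback invariant is also false. The sets $\{e_{12},e_{23}\}$ and $\{e_{12},e_{23},e_{13}\}$ span the same buses, with the same components and the same graphic-matroid rank, yet give ranks $4$ and $5$.
\item A genericity assumption on $\bm\delta$ points the wrong way, since equality here requires the coincidence $\delta_{12}=\delta_{23}$.
\end{itemize}

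The paper's own proof has the same hole. From $\operatorname{rank}(\Delta\Hm_{cycle})=\lvert\Ec_{path}\rvert=\operatorname{rank}(\Delta\Hm_{path})$ (both equal $2$ here) it concludes $\operatorname{rank}[\Hm\mid\Delta\Hm_{path}]=\operatorname{rank}[\Hm\mid\Delta\Hm_{cycle}]$. But equal rank does not mean equal column space, and in this example $\operatorname{col}(\Delta\Hm_{cycle})\not\subseteq\operatorname{col}[\Hm\mid\Delta\Hm_{path}]$.

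The equality can hold in special configurations. For instance, on a bare triangle $v_1v_2v_3$ is the only cycle through $e_{13}$, and both ranks equal $3$ generically. Any true version of the theorem therefore needs an extra hypothesis on how the cycles through $e_{ij}$ meet $\Ec_m$.
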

\begin{proof}
The proof is provided in Appendix \ref{Pro:Theo:cycle}.
\end{proof}

Theorem \ref{Theo:cycle} demonstrates that if adding a branch to the set $\Ec_{m}$ forms a cycle with already-selected branches, there is no need to add that branch, since it has no contribution in increasing the rank objective in \eqref{Equ:BiObjectiveOpt}. 
Consequently, to achieve an effective MTD with smaller value of $\lvert \Ec_{m} \rvert$ (i.e. modify the admittance of less branches), 
the set of branches whose admittances are changed should form a forest (collection of trees) in graph $\Gc$.

\vspace{-0.5em}

\section{BT-MTD Algorithm}

Building on previous findings, we introduce the BT-MTD algorithm (Algorithm \ref{Algo:BT-MTD}) to minimize the stealthy attack space dimension $\dim (\Sc_a)$, using fewer branch admittance modifications, i.e. with smaller value of $\lvert \Ec_{m} \rvert$.
This bus-oriented algorithm starts on a simplified graph devoid of degree-one buses (Theorem \ref{Theo:deg1}). 
From a random initial bus, it traverses the network topology, prioritizing lower-degree buses and the branches that link them.


Note that the previous results in \cite{xu_2023_robust}, \cite[Section IV.C]{zhang_2020_analysis}, \cite{liu_2020_optimal}, and our results in Section IV show that the topology is the core factor affecting the effectiveness of MTD, and that this characterization holds for all post-MTD admittance values that differ from the pre-MTD ones.
To that end, the focus of BT-MTD is to identify the branches whose admittances should be changed.
However, it is important to clarify that this independence from admittance values assumes a \emph{perfect stealth attack}, which leaves the residual entirely unchanged.
In practice, if the applied admittance perturbations are too small, the resulting increase in the BDD residual may not exceed the detection threshold $\tau$ in \eqref{Equ:ResTest}, potentially causing the BDD to miss the attack.

The notations employed in Algorithm \ref{Algo:BT-MTD} are defined as follows.
$\Vc_m$ denotes the candidate buses for the next iteration;
$\Ec_c$ is the set of branches that is already covered during the traversal;
and $ {\cal L}_{v_i}$ is a list of elements belonging to $\Ec_{v_i}$ with a defined order.
Note that $\Ec_c$ is different from $\Ec_m$, in which it holds that $\Ec_m \subseteq \Ec_c$.
Specifically, if adding a branch into the set $\Ec_m$ forms a cycle, or adding a branch into the set $\Ec_c$ makes $\Ec_c$ includes all the branches incident to a bus, then it is known from Theorem \ref{Theo:cycle} or Theorem \ref{Theo:deg2} that this branch should not be included in $\Ec_m$ and does not need to be considered in the following iterations.
To that end, it is only added into the set $\Ec_c$ and considered as ``covered'' by BT-MTD.
Furthermore,
a Boolean variable \texttt{flag} is utilized within the \texttt{ProcessDegreeOne} function to control its behavior.
When flag is false, the function executes the initial graph reduction (as discussed after Theorem 1).
Conversely, when flag is true, the function operates within the main loop, dynamically updating $\Vc_m$ whenever a bus is pruned.

\begin{algorithm}[t!]
\caption{BT-MTD Algorithm}
\label{Algo:BT-MTD}
\SetKw{KwStatex}{}
\KwIn{Grid topology ${\cal G} = (\Vc, \Ec)$}
\KwOut{Set of branches with modified admittance $\Ec_{m}$}
\SetKwProg{Fn}{Function}{:}{}
\SetKwFunction{BTMTD}{BTMTD}
\SetKwFunction{ProcessDegreeOne}{ProcessDegreeOne}
\DontPrintSemicolon
\Fn{\BTMTD{${\cal G}$}}{
    Initialize $\Ec_{m} \leftarrow \varnothing$, $\Vc_{m} \leftarrow \varnothing$, $\Ec_{c} \leftarrow \varnothing$ \;
    \tcc{prune leaf buses (Theorem~\ref{Theo:deg1})}
    \ForEach{$v_i \in \Vc$}{
        \ProcessDegreeOne{$v_i$, \texttt{false}} \;
    }
    \tcc{initialize the traversal}
    Select a random bus $v_i \in \Vc$ \;
    $\Vc_{m} \leftarrow \Vc_{m} \cup \{ v_i \}$ \;
    \tcc{traverse to build forest}
    \While{$\lvert \Ec_{c} \rvert < \lvert \Ec \rvert$}{
        Select $v_i \in \Vc_{m}$ with the minimum degree \;
        $\Vc_{m} \leftarrow \Vc_{m} \setminus \{v_i\}$ \;
        Get sorted list $\Lc_{v_i}$ for each $e_{ij} \in \Ec_{v_i}$ in ascending order of $\lvert \Ec_{v_j} \rvert$ \;
        \ForEach{$e_{ij} \in \Lc_{v_i}$}{
            \If{\texttt{not} \texttt{FormsCycle}($e_{ij}$, $\Ec_m$)}{
                $\Ec_{m} \leftarrow \Ec_{m} \cup \{e_{ij}\}$ \;
                $\Ec_{c} \leftarrow \Ec_{c} \cup \{e_{ij}\}$ \;
                $\Vc_{m} \leftarrow \Vc_{m} \cup \{v_j\}$ \;
                $\Ec_{v_i} \leftarrow \Ec_{v_i} \setminus \{e_{ij} \}$ \;
                $\Ec_{v_j} \leftarrow \Ec_{v_j} \setminus \{e_{ij} \}$ \;
                \tcc{prune leaves after updates}
                \ProcessDegreeOne{$v_j$, \texttt{true}} \;
                \ProcessDegreeOne{$v_i$, \texttt{true}} \;
            }
        }
    }
}
\vspace{0.5em}
\Fn{\ProcessDegreeOne{$v_i$, \texttt{flag}}}{
    \tcc*[r]{\texttt{flag}=\textbf{true}: maintain $\Vc_m$ in main loops; \texttt{false}: graph reduction}
    \While{$\lvert \Ec_{v_i}\rvert = 1$}{
        $\Vc \leftarrow \Vc \setminus \{v_i\}$ \;
        Let $e_{ij}$ be the sole branch in $\Ec_{v_i}$ \;
        \If{\texttt{flag}}{
            $\Vc_{m} \leftarrow \Vc_{m} \setminus \{v_i\}$ \;
            $\Vc_{m} \leftarrow \Vc_{m} \cup \{v_j\}$ \;
        }
        $\Ec_{c} \leftarrow \Ec_{c} \cup \{ e_{ij} \}$ \;
        $\Ec_{v_i} \leftarrow \Ec_{v_i} \setminus \{e_{ij} \}$ \;
        $\Ec_{v_j} \leftarrow \Ec_{v_j} \setminus \{e_{ij} \}$ \;
        $v_i \leftarrow v_j$ \;
    }
}
\end{algorithm}

\subsection{Algorithm Description}
The BT-MTD algorithm is mainly composed of two phases and a defined function.

\subsubsection{Initialization and Pre-processing (Lines 2-7)}
This phase begins by initializing the set of branches with modified admittance $\Ec_{m}$, candidate buses $\Vc_{m}$, and covered branches $\Ec_{c}$ as empty (Line 2).
Subsequently, a graph pre-processing step (Lines 3-5, guided by Theorem \ref{Theo:deg1}) simplifies the network topology, in which the \texttt{ProcessDegreeOne} function is repeatably called to identify and eliminate the degree-one buses and their incident branches until there is no degree-one bus remains.
Then, a random bus is selected as the starting bus (Line 6) and added to the candidate bus set $\Vc_m$ (Line 7).

\subsubsection{Main Iterative Loop (Lines 8-21)}
Following the first phase, the second phase commences by iteratively adding branches to set $\Ec_m$ until all branches are covered (i.e. $ \Ec_{c} = \Ec$ ).
Specifically, in each iteration,
\begin{itemize}
    \item the algorithm selects the bus with the minimum degree from the candidate set $\Vc_{m}$, and removes it from  $\Vc_{m}$ (Lines 9-10);
    \item It then examines the branches connected to this bus, and prioritizes adding the branches that lead to low-degree neighbors into the set $\Ec_m$ (Lines 11-12).
    A branch is only added to the set $\Ec_m$, if it doesn't create a cycle, which is a core rule from Theorem \ref{Theo:cycle} (Line 13).
\end{itemize}

After a valid branch $e_{ij}$ is added to $\Ec_m$ and $\Ec_c$ (Lines 14-15), the algorithm updates for the next iteration.
\begin{itemize}
    \item The neighboring bus $v_j$ is added to the candidate set $\Vc_{m}$ (Line 16).
    \item The branch $e_{ij}$ is removed from both $\Ec_{v_i}$ and $\Ec_{v_j}$ (Lines 17-18), which will also makes the degree of both bus $v_j$ and $v_i$ decreasing by one.
    \item Finally, the \texttt{ProcessDegreeOne} function is called on both bus $v_j$ and bus $v_i$ (Lines 19-20), but for different strategic reasons.
    For the neighbor bus $v_j$, the function checks if its degree has now decreased to one.
    If it has, Theorem \ref{Theo:deg1} indicates that the bus $v_j$ offers no more strategic value, and its last connected branch is simply considered as ``covered''.
    For the original bus $v_i$, the function checks if there is only one branch remains unmodified.
    If it is true, Theorem \ref{Theo:deg2} indicates that modifying the last remaining branch does not decrease the dimension of the stealthy attack space; and that branch can be simply considered as ``covered''.
\end{itemize}

The aforementioned two phases strategically simplify the graph topology, avoiding changing the admittance of branches that do not contribute to the rank objective function in \eqref{Equ:BiObjectiveOpt}.

\subsubsection{Function \texttt{ProcessDegreeOne} (Lines 24-36)}
This function executes a ``depth-first search'' only when the degree of a bus is one or is decreased to one.
When the flag is ``false'', the function implements the graph pre-processing and simplifies the network topology.
Specifically, when the degree of bus $v_i$ is one or is decreased to one (Line 25),
it removes $v_i$ from the set of $\Vc$ (Line 26) and adds its sole connecting branch $e_{ij}$ to the covered set $\Ec_c$ (Line 32).
Also the branch is removed from the adjacency sets of both $v_i$ and $v_j$ (Lines 33-34).
The traversal subsequently advances to the neighboring bus $v_j$, which becomes the new $v_i$ for the next iteration (Line 35).
This procedure finishes when a bus with a degree greater than one is encountered.

When the flag is ``true'', the function implements the graph update in the main iterative loop, in which the set of candidate buses $\Vc_m$ also needs to be updated.
Specifically, when the degree of bus $v_i$ is decreased to one, there is no need to explore that bus in the following iteration, and it is removed from the set $\Vc_m$ (Lines 29).
And bus $v_j$ is added into the candidate bus set $\Vc_m$ (Line 30).

{\bf Remark (Pareto Optimality):}
The termination condition $|\Ec_c| < |\Ec|$ specified in Line 8 ensures that the algorithm exhaustively explores the entire graph, thereby guaranteeing the maximum attainable rank by processing all branches.
However, this condition can be generalized to accommodate real-world resource limitations by substituting it with $|\Ec_m| \leq k$, where $k$ is the maximum number of branches that can be protected.
Varying $k$ allows the algorithm to explore the inherent trade-off in \eqref{Equ:BiObjectiveOpt}: iterating over different $k$ values generates a Pareto frontier, where each point represents a distinct balance between protection cost and the achieved rank.

{\bf Remark (Restricted D-FACTS Placement):}
Algorithm~\ref{Algo:BT-MTD} can be adapted to handle hard candidate restrictions, such as when only a specific subset of branches $\Ec_{\text{cand}}\subseteq\Ec$ is actuable.
The selection of actuable branches can be constrained simply by replacing the per-bus list $\mathcal{L}_{v_i}$ in Line 12 with $\mathcal{L}_{v_i}\cap\mathcal{E}_{\text{cand}}$.
The conclusions established in Theorems 1 through 3 remain valid, as they hold for any arbitrary subset of  $\mathcal{E}$.

\begin{figure}[t]
\centering
\includegraphics[width=0.7\linewidth]{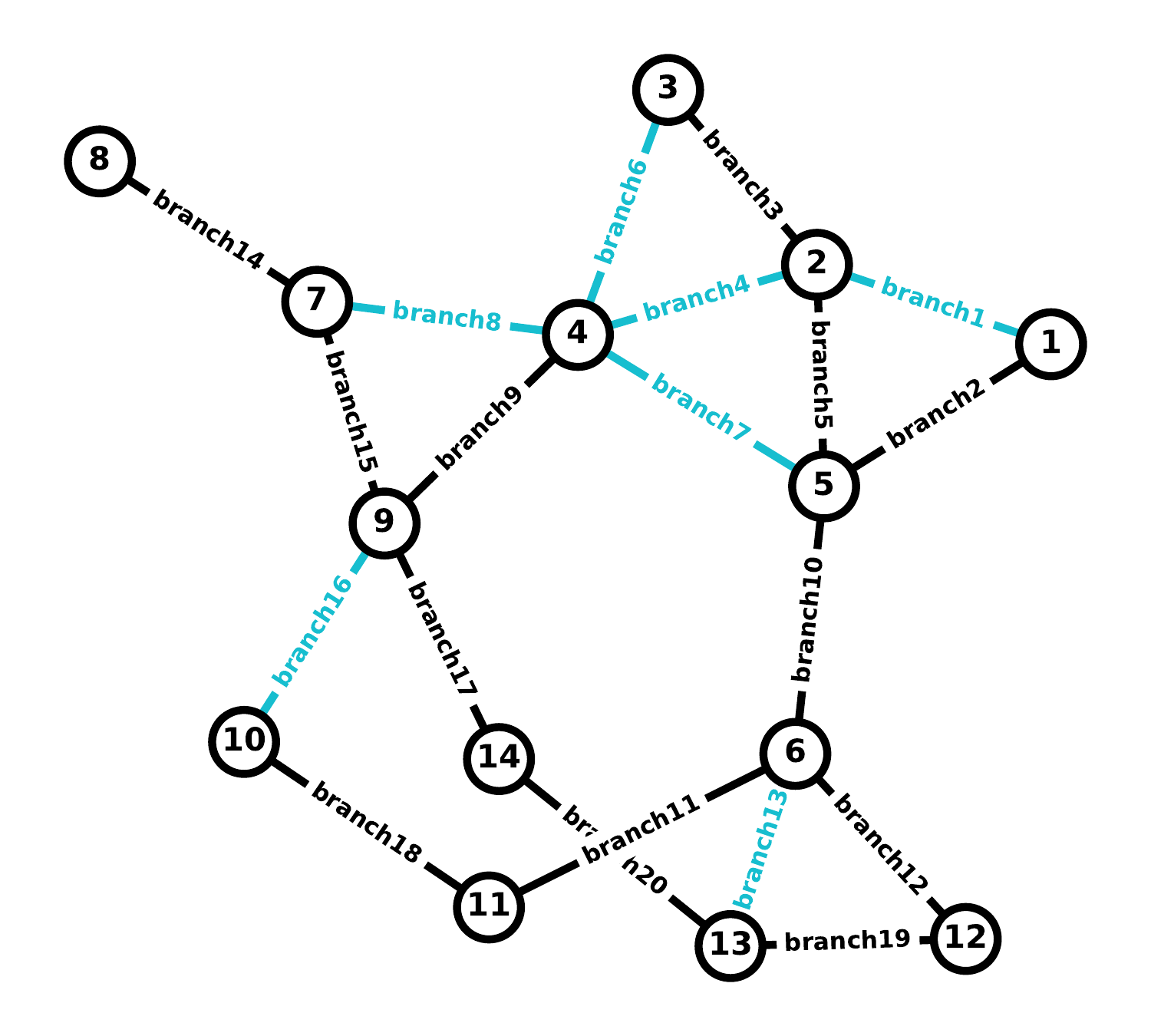}
\caption{The topology of the IEEE 14-Bus system, where the branches selected by BT-MTD ($\Ec_m$) are highlighted in Azure.}
\label{Fig:14BusMTD}
\end{figure}

\subsection{Illustrative Example: IEEE 14-Bus System}

To illustrate Algorithm \ref{Algo:BT-MTD}, the IEEE 14-Bus system in Fig. \ref{Fig:14BusMTD} is used as an example.
The process unfolds in four key topological phases:

\begin{enumerate}
    \item \textbf{Pre-processing (Leaf Pruning):}
    Initially, Bus 8 is identified as a degree-one node. According to Theorem \ref{Theo:deg1}, its incident Branch 14 provides no rank gain and is thus excluded from $\Ec_m$ (marked as covered). Bus 4 is then randomly selected as the traversal root.

    \item \textbf{Root Expansion:}
    From Bus 4, the algorithm selects incident Branches 8, 6, 4, and 7 to maintain connectivity. Branch 9 is skipped as its inclusion renders the remaining subgraph incident to Bus 5 redundant (Theorem \ref{Theo:deg2}).

    \item \textbf{Topological Pruning:}
    The selection in the previous step effectively reduces Buses 3 and 7 to degree-one nodes relative to the remaining graph. Hence, their incident Branches 3 and 15 are deemed redundant and skipped.

    \item \textbf{Cycle-Constrained Selection:}
    Moving to the remaining candidate buses:
    \begin{itemize}
        \item At Bus 13 and 9, Branches 13 and 16 are selected. Branch 17 is skipped as Bus 14 becomes a leaf.
        \item At Bus 2, Branch 5 is rejected because it forms a cycle with the previously selected Branches 4 and 7 (Theorem~\ref{Theo:cycle}). Instead, Branch 1 is selected.
    \end{itemize}
\end{enumerate}

Upon termination, the algorithm yields the optimized set $\Ec_{m} = \{ 1, 4, 6, 7, 8, 13, 16 \}$ with $|\Ec_{m}| = 7$, achieving a final stealthy attack space dimension of 6.

\subsection{ Algorithm Complexity}

The computational complexity of BT-MTD is as follows: 
\paragraph{Pre-processing} Pruning degree-one buses via adjacency list traversal requires $\Oc(|\Vc| + |\Ec|)$ time.
\paragraph{Traversal Loop} 
In each iteration, the algorithm extracts the minimum-degree bus from a priority queue in $\Oc(\log |\Vc|)$ time, sorts its incident branches by neighbor degree in $\Oc(\deg(v_i) \log \deg(v_i))$ time, and tests each incident branch for cycle formation using a Disjoint Set Union (DSU) structure with path compression. 
This DSU operation takes an amortized $\Oc(\alpha(|\Vc|))$ time per operation, where $\alpha$ is the inverse Ackermann function~\cite{tarjan_efficiency_1975}.
Since each branch is examined at most once across all $|\Vc|$ iterations, the aggregate sorting cost is $\Oc(|\Ec| \log \Delta)$ (where $\Delta \leq |\Vc|$ is the maximum nodal degree) and the aggregate DSU cost is $\Oc(|\Ec| \cdot \alpha(|\Vc|))$.

Combining these steps, BT-MTD achieves a quasi-linear complexity of $\Oc(|\Ec| \log |\Vc|)$ by adding multiple branches per iteration.
In contrast, traditional greedy MTD methods select branches individually, requiring expensive $\Oc(n^3)$  SVD/QR rank evaluations per candidate~\cite[Ch.~5, 8]{golub_matrix_2013}, \cite{chen_survey_2025}.
This results in a heavy total complexity of $\Oc(k \cdot |\Ec| \cdot n^3)$.
Consequently, BT-MTD reduces operations by multiple orders of magnitude for large-scale systems, as confirmed in Table \ref{tab:compare_transposed}.

\subsection{Theoretical Performance Evaluation and Robustness}
\label{subsec:theory}

\subsubsection{Theoretical Performance Evaluation}
Recall from Lemma~\ref{Lem:deltaH} that each branch $e_{ij} \in \Ec_m$ contributes specific column vectors to $\Delta\Hm$. 
Hence, the rank of the augmented matrix $[\Hm \mid \Delta\Hm]$ equals the dimension of the vector space spanned by the columns of $\Hm$ together with the columns contributed by $\Ec_m$. 
This corresponds precisely to the rank function of a \textbf{linear matroid} defined on the ground set of column vectors associated with each branch~\cite[Sec.~1.1]{oxley_matroid_2006}.
Since any matroid rank function is monotone and submodular~\cite[Sec.~1.3]{oxley_matroid_2006}, \cite{krause_submodular_2014}, the objective function $\operatorname{rank}([\Hm \mid \Delta\Hm])$ in \eqref{Equ:BiObjectiveOpt} is a \textbf{monotone submodular} set function over subsets $\Ec_m \subseteq \Ec$.


For monotone submodular functions, it is proved in \cite{nemhauser_analysis_1978} that maximizing a monotone submodular function subject to a cardinality constraint (i.e., selecting a set of at most $k$ items) guarantees a solution whose value is at least $(1 - 1/e) \approx 63\%$ times the optimal value.
This implies that BT-MTD can always output a set with at least about $63\%$ of the maximum value.



\subsubsection{Robustness Against Measurement Noise}

It is worth mentioning that the branch selection logic of BT-MTD relies exclusively on \textit{discrete topological properties} (connectivity and cycle formation), which is static or nearly-static and is not affected by the disturbances (such as measurement noise).
To that end, the optimal set $\Ec_{m}$ generated by BT-MTD is inherently decoupled from measurement noise.

\section{Simulation Evaluation}

\subsection{Simulation Setup}
\label{SubSec:SimulationSetup}

To validate the performance of BT-MTD, simulations were performed on standard IEEE test systems using the Python package PYPOWER \cite{zimmerman_matpower_2011} and a MacBook Pro with an Apple M1 Pro processor and 16 GB of RAM.

Following the setting of \cite{morrow_2012_topology} and \cite{zhang_2020_analysis}, the branch admittance modification ratio $\delta_{ij}$ in Lemma \ref{Lem:deltaH} was bounded to the interval $[0.8, 1.2]$ for any branch $ e_{ij} \in \Ec_m$.
This aligns with modern D-FACTS capabilities, such as SmartValve\texttrademark \cite{smartvalve_web}, which can achieve impedance modifications exceeding $\pm20\%$ and enable rapid reconfiguration with ramp time under 200 ms.


\begin{figure*}[t!]
    \centering
    \begin{subfigure}[b]{0.28\textwidth}
        \centering
        \includegraphics[width=\textwidth]{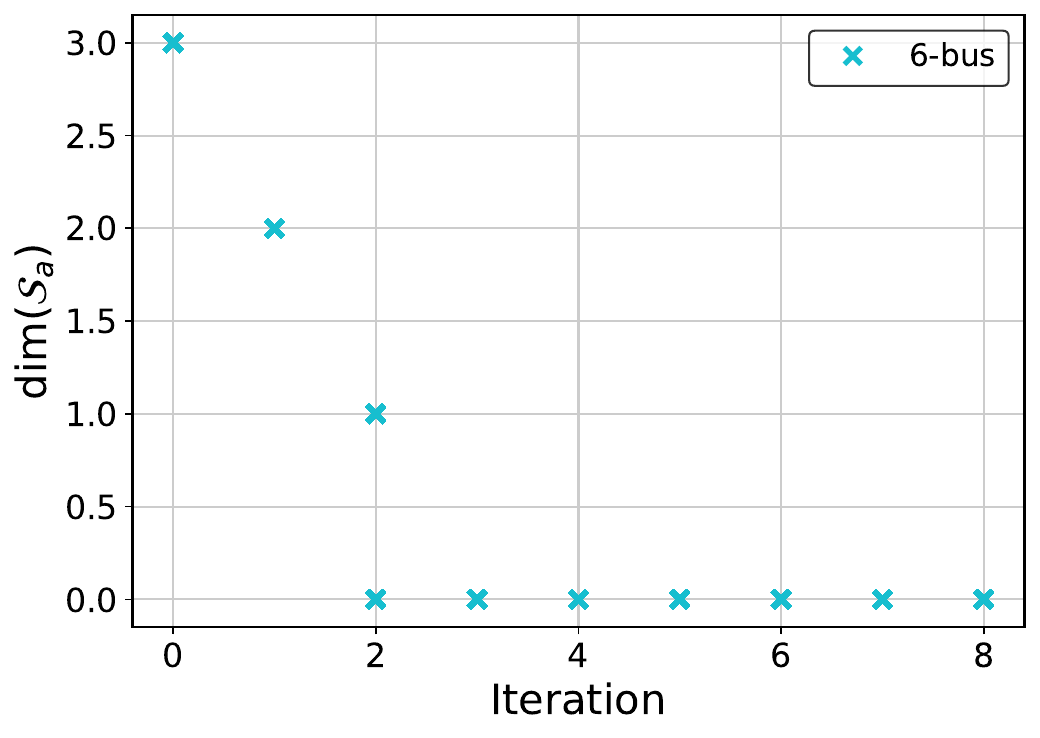}
        \caption{6-bus system}
        \label{fig:6bus}
    \end{subfigure}
    \hfill
    \begin{subfigure}[b]{0.28\textwidth}
        \centering
        \includegraphics[width=\textwidth]{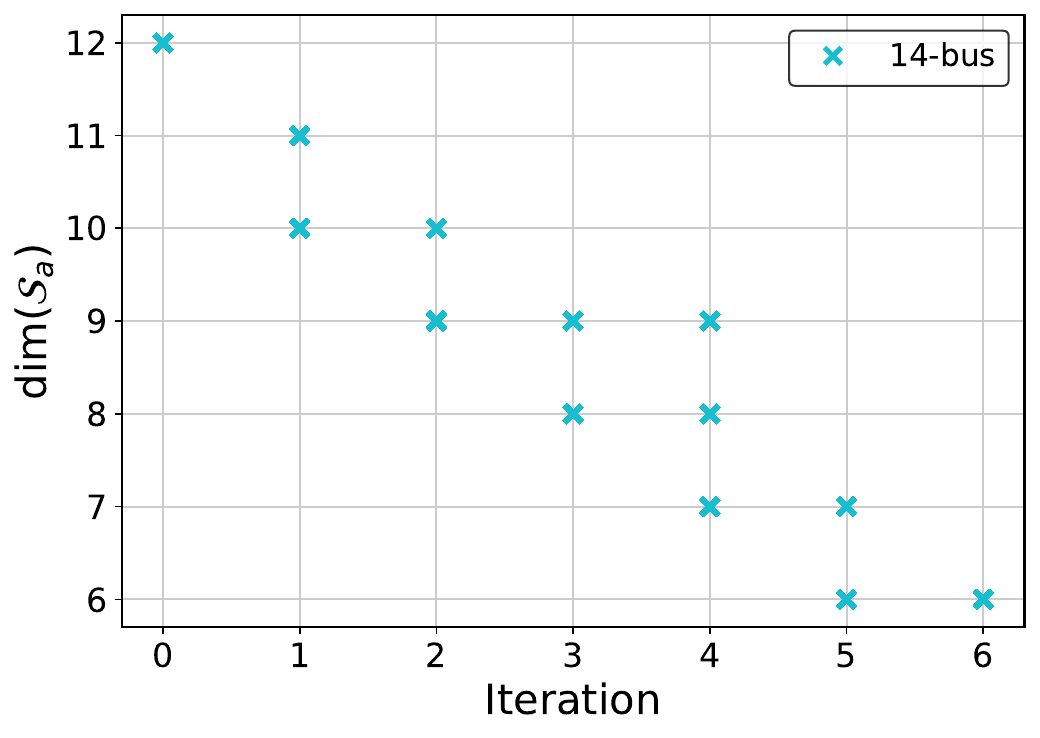}
        \caption{14-bus system}
        \label{fig:14bus}
    \end{subfigure}
    \hfill
    \begin{subfigure}[b]{0.28\textwidth}
        \centering
        \includegraphics[width=\textwidth]{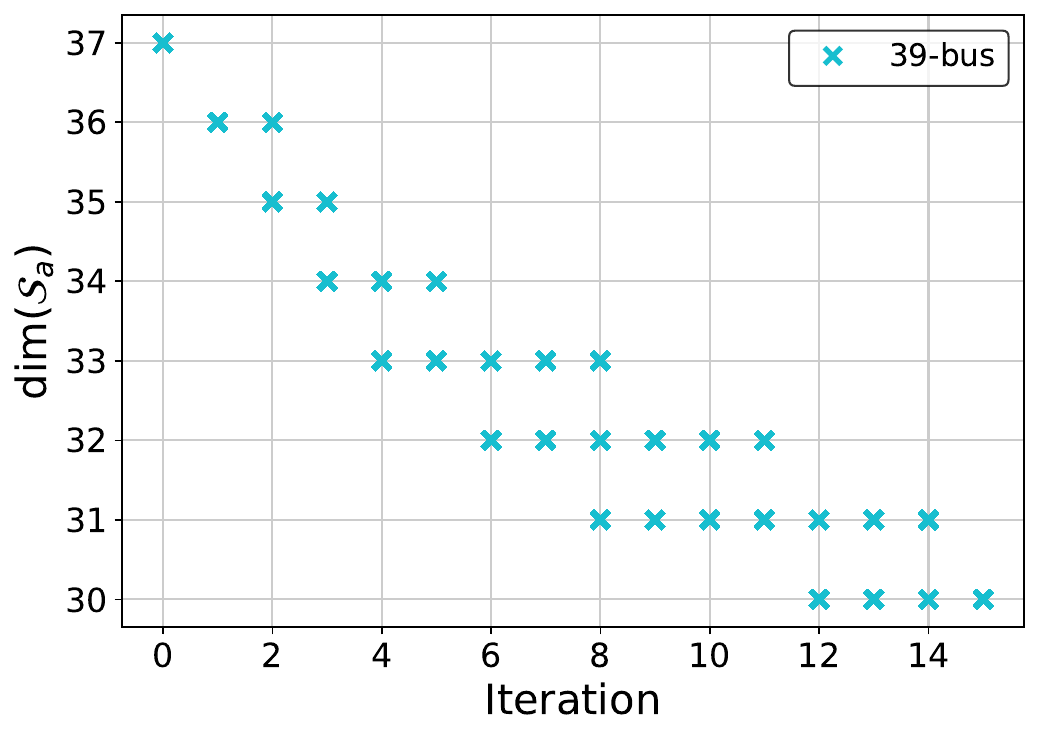}
        \caption{39-bus system}
        \label{fig:39bus}
    \end{subfigure}
    \begin{subfigure}[b]{0.28\textwidth}
        \centering
        \includegraphics[width=\textwidth]{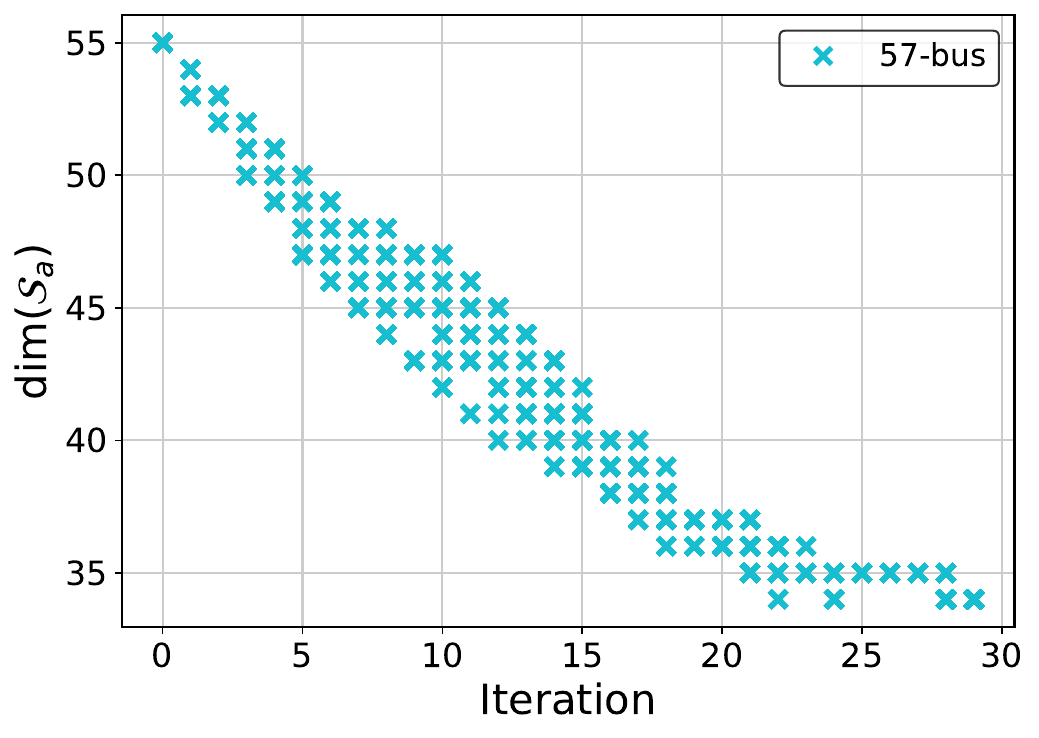}
        \caption{57-bus system}
        \label{fig:57bus}
    \end{subfigure}
    \hfill
    \begin{subfigure}[b]{0.28\textwidth}
        \centering
        \includegraphics[width=\textwidth]{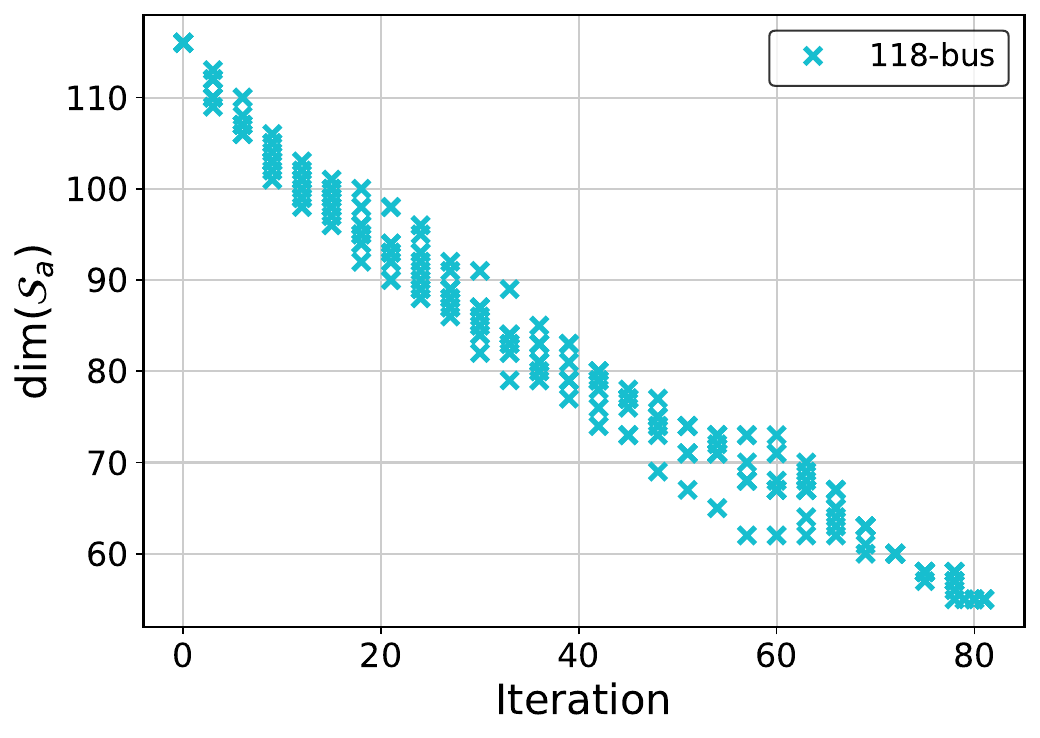}
        \caption{118-bus system}
        \label{fig:118bus}
    \end{subfigure}
    \hfill
    \begin{subfigure}[b]{0.28\textwidth}
        \centering
        \includegraphics[width=\textwidth]{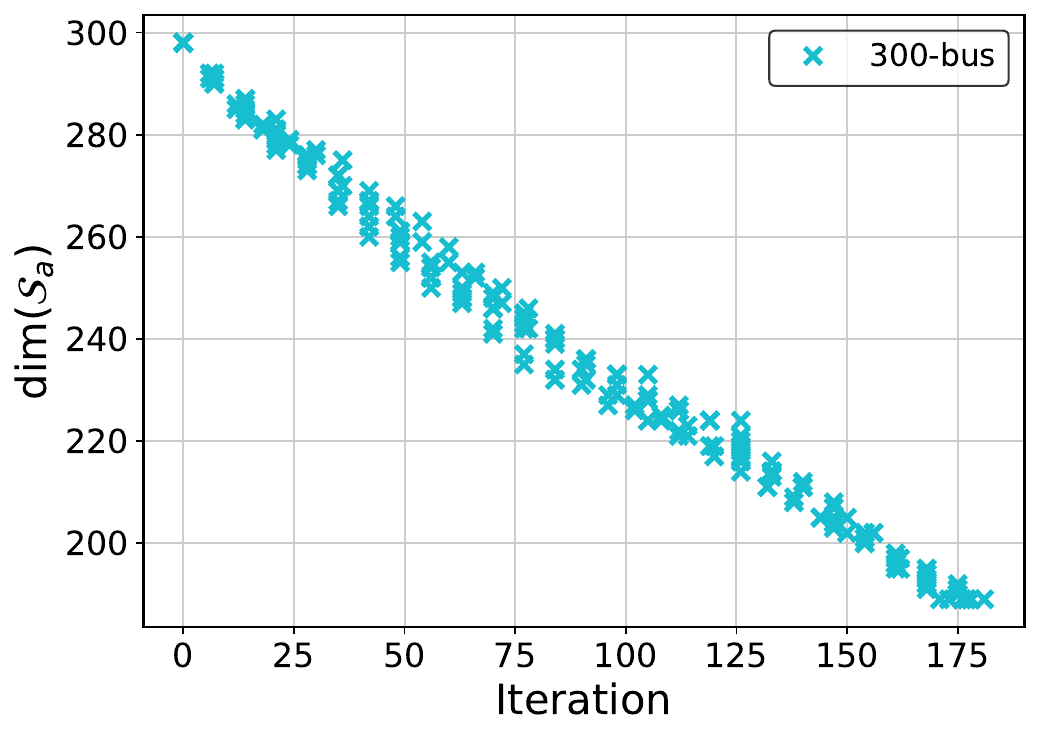}
        \caption{300-bus system}
        \label{fig:300bus}
    \end{subfigure}
    \caption{Performance of the BT-MTD algorithm across different test systems, in which each cross represents the result of one trial at a specific iteration and identical performance from multiple trials at the same iteration is denoted by a single cross.}
    \label{fig:allbuses}
\end{figure*}

\subsection{Performance Evaluation and Robustness Analysis}
\label{subsec:performance_robustness}

The performance of BT-MTD algorithm is visualized in Fig. \ref{fig:allbuses}, where the vertical axis represents the dimension of the stealthy attack space, i.e. $\dim(\mathcal{S}_a)$, and the horizontal axis denotes the number of iterations.
Each iteration corresponds to one execution of the main loop (Lines 8-21 of Algorithm \ref{Algo:BT-MTD}), during which one or more branches are added to the set $\Ec_m$.

\subsubsection{Performance Across Different System Scales}
Fig. \ref{fig:allbuses} demonstrates a consistent trend across all test systems: $\dim(\mathcal{S}_a)$ monotonically decreases as the number of iterations increases.
\textbf{This empirical observation aligns with the theoretical monotone submodularity presented in Section \ref{subsec:theory},} confirming the effectiveness of BT-MTD in identifying impactful branches while avoiding redundancy.
However, the convergence patterns vary by system scale:
\begin{itemize}
    \item \textbf{Small-scale systems} (e.g. 6/14-bus): The dimension of the stealthy attack space reaches the \textbf{converged minimum} within only a few iterations, highlighting the rapid efficacy of BT-MTD on smaller topologies.
    \item \textbf{Large-scale systems} (e.g. 118/300-bus): As the system scale increases, the dimension decreases in a steady pattern before stabilizing. This indicates that for complex grids, BT-MTD maintains a consistent rate of security enhancement per iteration until the topological constraints are met.
\end{itemize}

\subsubsection{Algorithmic Robustness to Initialization}
To assess the impact of the stochastic starting bus selection (Line 6 of Algorithm \ref{Algo:BT-MTD}), BT-MTD was run for 50 trials.
Fig. \ref{fig:allbuses} visualizes the algorithmic robustness: although the trajectories of $\dim(\mathcal{S}_a)$ exhibit vertical dispersion in early iterations due to different traversal paths, this variation diminishes as the algorithm progresses.
Crucially, all trials consistently converge to the same minimum dimension.
This robust convergence substantiates that BT-MTD leverages structural invariants, rather than heuristic guesses, thereby ensuring reliable protection even under randomized deployment scenarios.

\subsubsection{Pareto Optimality under Limited Budget}
As mentioned at the end of Section V-A, varying the number of actuable D-FACTS devices allows BT-MTD to explore the Pareto frontier. Fig.~\ref{Fig:DimSaVsK} illustrates this Pareto frontier for the IEEE 118-bus system, in which the algorithm is run for 30 trails. The plot reveals that the slope is approximately $-1$ dimension per branch.
This linear relationship reflects the optimal forest construction jointly guaranteed by Theorem~\ref{Theo:deg2} and Theorem~\ref{Theo:cycle}.
Specifically, every non-cycle branch commitment contributes exactly one rank dimension, ensuring that the actuation budget is never wasted on redundant branches.

\begin{figure}[t]
\centering
\includegraphics[width=0.6\linewidth]{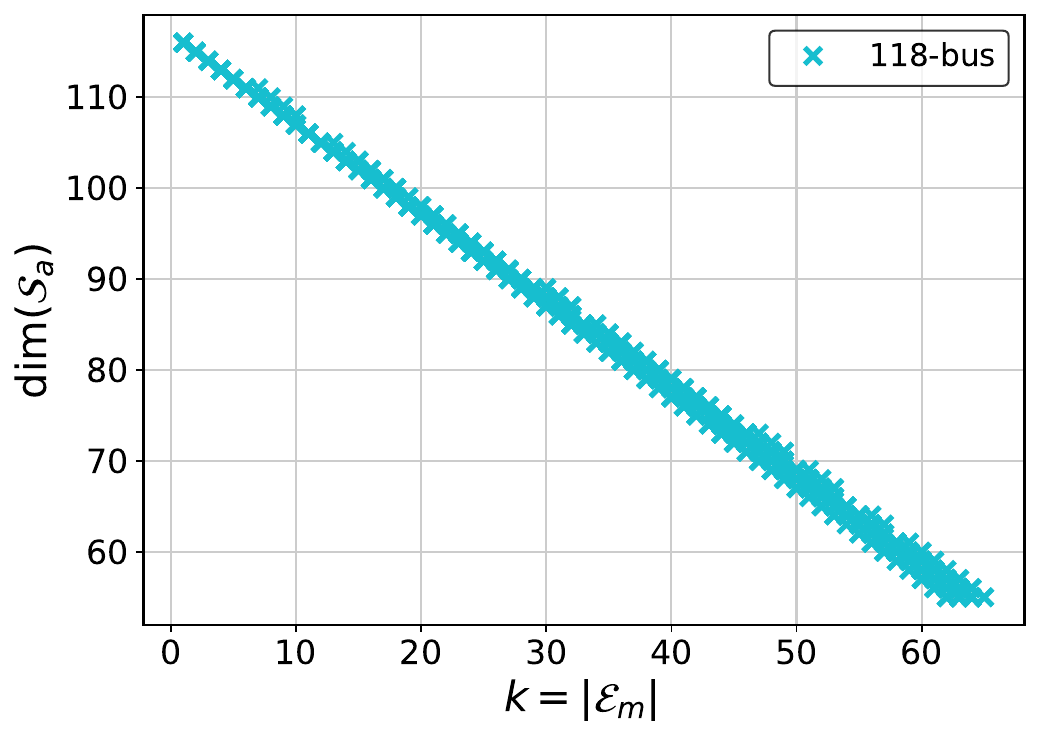}
\caption{Pareto frontier on the IEEE 118-bus system.}
\label{Fig:DimSaVsK}
\end{figure}

\subsection{Comparative Analysis with Existing Strategies}
\label{subsec:comparison}

The proposed BT-MTD algorithm is compared with four established MTD strategies: PFDD \cite{li_feasibility_2020}, Optimal MTD \cite{liu_2021_optimal}, Robust MTD \cite{xu_2023_robust}, and Cyclic-MTD \cite[Algorithm 1]{wang_small-signal_2025}.
The performance of each strategy is evaluated by three metrics:
\textbf{Security Effectiveness}, i.e. the dimension of the final stealthy attack space $\dim({\cal S}_a)$;
\textbf{Resource Efficiency}, i.e. the number of branch admittances that need to be modified, i.e., $\lvert {\cal E}_m \rvert$;
\textbf{Computational Cost}, i.e. the execution time in seconds. 
Note that smaller values for these three metrics indicate better algorithm performance. 
The quantitative results are summarized in Tab. \ref{tab:compare_transposed}.

\begin{table}[t!]
\centering
\caption{Performance Comparison of MTD Strategies}
\label{tab:compare_transposed}
\renewcommand{\arraystretch}{1.2}
\resizebox{\linewidth}{!}{
\begin{tabular}{@{}llcccccc@{}}
\toprule
\textbf{Strategy} & \textbf{Metric} & \textbf{6-bus} & \textbf{14-bus} & \textbf{39-bus} & \textbf{57-bus} & \textbf{118-bus} & \textbf{300-bus} \\
\midrule

\multirow{3}{*}{\makecell[l]{\textbf{PFDD} \\ \scriptsize{(Graph/Cut-set)}}} 
 & $\dim({\cal S}_a)$        & 0      & 6      & 30     & 34     & 55     & 192    \\
 & $\lvert {\cal E}_m \rvert$ & 5      & 13     & 38     & 56     & 117    & 199    \\
 & Time (s)                  & 0.0001 & 0.0002 & 0.0009 & 0.0022 & 0.018  & 0.171  \\
\midrule

\multirow{3}{*}{\makecell[l]{\textbf{Optimal MTD} \\ \scriptsize{(Optimization)}}} 
 & $\dim({\cal S}_a)$        & 0      & 6      & 30     & 34     & 56     & 190    \\
 & $\lvert {\cal E}_m \rvert$ & 7      & 9      & 13     & 26     & 67     & 126    \\
 & Time (s)                  & 0.0029 & 0.0046 & 0.0097 & 0.1186 & 0.1536 & 0.9966 \\
\midrule

\multirow{3}{*}{\makecell[l]{\textbf{Robust MTD} \\ \scriptsize{(Optimization)}}}
 & $\dim({\cal S}_a)$        & 0      & 6      & 30     & 34     & 55     & 189    \\
 & $\lvert {\cal E}_m \rvert$ & 5      & 7      & 14     & 30     & 67     & 126    \\
 & Time (s)                  & 0.001  & 0.0015 & 0.0028 & 0.0085 & 0.043  & 0.249  \\
\midrule

\multirow{3}{*}{\makecell[l]{\textbf{Cyclic-MTD} \\ \scriptsize{(Graph/Cycle)} 
}}
 & $\dim({\cal S}_a)$        & 0     & 6      &30     & 34     & 55     & 189    \\
 & $\lvert {\cal E}_m \rvert$ & 11     & 19     & 35     &77    & 170    & 319    \\
 & Time (s)                 & 0.0005 & 0.0003 & 0.0011 & 0.0121 & 0.0161 & 0.3684\\
\midrule

\multirow{3}{*}{\makecell[l]{\textbf{BT-MTD} \\ \scriptsize{(Graph/Traversal)}}}
 & $\dim({\cal S}_a)$        & 0      & 6      & 30     & 34     & 55     & 189    \\
 & $\lvert {\cal E}_m \rvert$ & 5      & 7      & \bf{8} & \bf{22}& \bf{62}& \bf{114} \\
 & Time (s)                  & 0.0001 & \bf{0.0001} & \bf{0.0005} & \bf{0.0016} & \bf{0.014} & \bf{0.168} \\

\bottomrule
\end{tabular}
}
\end{table}

The results in Tab. \ref{tab:compare_transposed} demonstrate that the proposed BT-MTD algorithm outperforms the other three strategies in nearly every test case.
Only in the smallest system (6-bus system), it achieves the same performance as PFDD and Robust MTD in all three metrics.
In the rest of the test systems, it outperforms the other four strategies in at least one metric, and achieves the same performance in the other metrics.
Especially, its advantages in efficiency and cost become non-negligible in larger test systems, where it consistently requires the fewest branch admittance modifications (smallest value of $\lvert {\cal E}_m \rvert$) and the least computation time, as highlighted by the bolded values.
More importantly, it does not come at the cost of security.
Specifically, BT-MTD provides a final attack space dimension $\dim({\cal S}_a)$ that is either identical to or smaller than the other methods in every scenario.

Furthermore, BT-MTD is qualitatively benchmarked against DD-MTD~\cite{mohammadpourfard_adaptive_2025}.
Although DD-MTD effectively addresses AC nonlinearities through rigorous statistical tests (such as Kullback–Leibler divergence), this high fidelity comes at a substantially higher computational cost.  
By operating at the topological level with $\mathcal{O}(|\mathcal{E}| \log |\mathcal{V}|)$ complexity, 
BT-MTD can significantly reduce the search space for DD-MTD, mitigating its computational burden. 
These complementary approaches suggest a practical two-stage strategy: BT-MTD acts as an efficient topological pre-filter, followed by DD-MTD for high-fidelity parametric refinement. 
Section \ref{Sec:AC} details this tandem deployment.

\section{Discussion}


\subsection{Applicability to AC Models}\label{Sec:AC}

Although the preceding analysis relies on the linearized DC model following standard FDIA formulations \cite{liu_false_2011}, its topological insights remain fundamental for protecting the system under the nonlinear AC model.
Specifically, the detectability of FDIAs depends on whether the attack vector $\av$ lies in the column space of the Jacobian matrix.
Although the AC measurement function is nonlinear, the sparsity patterns of the AC Jacobian ($\Hm_{\text{AC}}$) and DC Jacobian ($\Hm$) are identical, since their nonzero entries are determined solely by the grid’s node–branch connectivity \cite{abur_power_2004}. 
Consequently, the stealthy attack subspaces defined by these structural zeros are the same in both models. 
Therefore,  maximizing $\operatorname{rank}(\Hm)$ (as achieved by BT-MTD) is thus a necessary prerequisite for maximizing $\operatorname{rank}(\Hm_{\text{AC}})$.

Leveraging this structural equivalence, BT-MTD provides a DC-derived structural prior on the candidate set for AC-based MTD solvers.
Its core mechanisms, such as identifying  cycles (Theorem \ref{Theo:cycle}), characterize the incidence-structure observability that is shared by the AC and DC Jacobians.
By operating on topology, BT-MTD efficiently identifies an optimized subset of branches that maximizes the rank with low computational complexity of $\Oc(|\Ec| \log |\Vc|)$.

\subsection{Resilience Against Sophisticated Adversary}
A fundamental assumption of MTD strategy is that the adversary cannot track the admittance changes.
However, a sophisticated adversary could exploit auxiliary information (such as breaker statuses or pre-MTD admittance) and real-time measurements to estimate the current admittances or topology, thereby infer that system parameters have been altered.


Nevertheless, identifying these parameters typically requires far more samples than the time scale of MTD deployment, especially in large networks.
Specifically, the result in \cite{veedu_information_2024} shows that the number of independent samples required to recover the graph structure scales with $\Omega(d \log n)$, where $n$ is the network size and $d$ is the sparsity degree.
For admittance estimation, the result in \cite{rin_admittance_2025} proves that uniquely determining the admittance matrix of an $n$-bus network requires at least $n - 1$ independent phasor measurement snapshots under ideal noiseless conditions.

However, power system measurements exhibit strong temporal correlation due to the quasi-static load profiles, so obtaining sufficiently independent samples often requires observation periods ranging from hours to days \cite{cavraro_learning_2021}.
In contrast, MTD mechanisms commonly rely on fast-acting D-FACTS devices to reconfigure branch admittances within seconds (see Section~\ref{SubSec:SimulationSetup}).
Therefore, the adversary is generally unable to collect sufficient samples to reliably identify the current system admittances or topology.

Another key assumption of the MTD strategy is that the adversary cannot track the specific branches whose admittances have been altered, i.e. the set $\Ec_m$.
If this information is leaked, an attacker can construct a stealth attack that requires no further admittance modifications \cite{sun_2024_stealth}.
However, this attack is only feasible under the necessary condition that the branches in $\Ec_m$
do not form a spanning tree.
As BT-MTD does not perturb a full spanning tree, preventing the leakage of $\Ec_m$ through robust cyber defenses remains a strict and necessary requirement to ensure the security of the MTD deployment.

\section{Conclusion}

In this paper, we proposed the BT-MTD algorithm, a new strategy designed to maximize the effectiveness of MTD with less branch admittance modifications and shorter computational time. 
To achieve this, we identified three key guidelines through theoretic analysis, specifically, avoid branches connecting to degree-one buses, never protect all branches incident to a bus while leaving one unprotected, and ensure no cycle in the set of protected branches. 
Furthermore, a theoretic performance guarantee of BT-MTD is provided via the monotone submodularity of the objective function. 
A comparison with PFDD, Optimal MTD, Robust MTD, and Cyclic-MTD demonstrates that BT-MTD strikes the best balance between effectiveness, efficiency, and computational cost.
Specifically, it achieves a similarly or more effective MTD with less branch admittance modifications and shorter computational time.

\bibliographystyle{IEEEtran}
\bibliography{reference}

\begin{thebibliography}{10}
\providecommand{\url}[1]{#1}
\csname url@samestyle\endcsname
\providecommand{\newblock}{\relax}
\providecommand{\bibinfo}[2]{#2}
\providecommand{\BIBentrySTDinterwordspacing}{\spaceskip=0pt\relax}
\providecommand{\BIBentryALTinterwordstretchfactor}{4}
\providecommand{\BIBentryALTinterwordspacing}{\spaceskip=\fontdimen2\font plus
\BIBentryALTinterwordstretchfactor\fontdimen3\font minus
  \fontdimen4\font\relax}
\providecommand{\BIBforeignlanguage}[2]{{%
\expandafter\ifx\csname l@#1\endcsname\relax
\typeout{** WARNING: IEEEtran.bst: No hyphenation pattern has been}%
\typeout{** loaded for the language `#1'. Using the pattern for}%
\typeout{** the default language instead.}%
\else
\language=\csname l@#1\endcsname
\fi
#2}}
\providecommand{\BIBdecl}{\relax}
\BIBdecl

\bibitem{abdi_role-of-dl_2024}
N.~Abdi, A.~Albaseer, and M.~Abdallah, ``{The Role of Deep Learning in
  Advancing Proactive Cybersecurity Measures for Smart Grid Networks: A
  Survey},'' \emph{IEEE Internet Things J.}, vol.~11, no.~9, pp.
  16\,398--16\,422, May 2024.

\bibitem{meyda_review_2024}
A.~Meydani, H.~Shahinzadeh, A.~Ramezani, H.~Nafisi, and G.~B. Gharehpetian, ``A
  review and analysis of attack and countermeasure approaches for enhancing
  smart grid cybersecurity,'' in \emph{2024 28th International Electrical Power
  Distribution Conference (EPDC)}, Jun. 2024, pp. 1--19.

\bibitem{vasquez-plaza_aggregated_2025}
J.~D. Vasquez-Plaza, S.~Subedi, N.~Bhujel, T.~M. Hansen, R.~Tonkoski, and F.~A.
  Rengifo, ``{Aggregated DER\_A Model Parameterization via Online Moving
  Horizon Estimation},'' \emph{IEEE Trans. Smart Grid}, vol.~16, no.~4, pp.
  3030--3044, Jul. 2025.

\bibitem{sou_resilient_2024}
K.~C. Sou and H.~Sandberg, ``{Resilient Scheduling of Control Software Updates
  in Radial Power Distribution Systems},'' \emph{IEEE Trans. Control Netw.
  Syst.}, vol.~11, no.~3, pp. 1465--1477, Sep. 2024.

\bibitem{ten_vulnerability_2008}
C.-W. Ten, C.-C. Liu, and G.~Manimaran, ``Vulnerability assessment of
  cybersecurity for scada systems,'' \emph{IEEE Trans. Power Syst.}, vol.~23,
  no.~4, pp. 1836--1846, Nov. 2008.

\bibitem{Sun_Stealth_2020}
K.~Sun, I.~Esnaola, S.~Perlaza, and H.~V. Poor, ``Stealth attacks on the smart
  grid,'' \emph{IEEE Trans. Smart Grid}, vol.~11, no.~2, pp. 1276--1285, Mar.
  2020.

\bibitem{mohammadpourfard_adaptive_2025}
M.~Mohammadpourfard, A.~Shefaei, and Y.~Weng, ``An adaptive moving-target
  defense strategy for dynamic nonlinear power systems,'' \emph{IEEE Trans.
  Ind. Inf.}, vol.~21, no.~5, pp. 4136--4145, May 2025.

\bibitem{kececi_federated_2025}
C.~Keçeci, K.~R. Davis, and E.~Serpedin, ``Federated learning-based
  distributed localization of false data injection attacks on smart grids,''
  \emph{IEEE Syst. J.}, vol.~19, no.~3, pp. 719--729, Jul. 2025.

\bibitem{kosut_malicious_2011}
O.~Kosut, L.~Jia, R.~J. Thomas, and L.~Tong, ``Malicious {data} {attacks} on
  the {smart} {grid},'' \emph{IEEE Trans. Smart Grid}, vol.~2, no.~4, pp.
  645--658, Dec. 2011.

\bibitem{liu_matrix-completion_2024}
B.~Liu, H.~Wu, Q.~Yang, H.~Zhang, Y.~Liu, and Y.~Zhang,
  ``Matrix-completion-based false data injection attacks against machine
  learning detectors,'' \emph{IEEE Trans. Smart Grid}, vol.~15, no.~2, pp.
  2146--2163, Mar. 2024.

\bibitem{esmalifalak_bad_2013}
M.~Esmalifalak, G.~Shi, Z.~Han, and L.~Song, ``Bad {data} {injection} {attack}
  and {defense} in {electricity} {market} {using} {game} {theory} {study},''
  \emph{IEEE Trans. Smart Grid}, vol.~4, no.~1, pp. 160--169, Mar. 2013.

\bibitem{seshasai_design_2024}
B.~Seshasai, E.~Koley, P.~K. Jena, and S.~Ghosh, ``Design of real-time false
  data injection attack on electricity market with limited sensor
  accessibility,'' \emph{IEEE Syst. J.}, vol.~18, no.~4, pp. 1999--2009, Oct.
  2024.

\bibitem{bi_profit-oriented_2022}
W.~Bi, G.~Chen, and K.~Zhang, ``Profit-oriented false data injection attack
  against wind farms and countermeasures,'' \emph{IEEE Syst. J.}, vol.~16,
  no.~3, pp. 3700--3710, Sep. 2022.

\bibitem{zhang_voltage-stability_2022}
H.~Zhang, B.~Liu, X.~Liu, A.~Pahwa, and H.~Wu, ``{Voltage Stability Constrained
  Moving Target Defense Against Net Load Redistribution Attacks},'' \emph{IEEE
  Trans. Smart Grid}, vol.~13, no.~5, pp. 3748--3759, Sep. 2022.

\bibitem{liao_cascading_2017}
W.~Liao, S.~Salinas, M.~Li, P.~Li, and K.~A. Loparo, ``Cascading {failure}
  {attacks} in the {power} {system}: {A} {stochastic} {game} {perspective},''
  \emph{IEEE Internet Things J.}, vol.~4, no.~6, pp. 2247--2259, Dec. 2017.

\bibitem{liu_false_2011}
Y.~Liu, P.~Ning, and M.~K. Reiter, ``False {data} {injection} {attacks}
  {against} {state} {estimation} in {electric} {power} {grids},'' \emph{ACM
  Trans. Inf. Syst. Secur.}, vol.~14, no.~1, pp. 13:1--13:33, Jun. 2011.

\bibitem{hug_vulnerability_2012}
G.~Hug and J.~A. Giampapa, ``Vulnerability {assessment} of {AC} {state}
  {estimation} {with} {respect} to {false} {data} {injection}
  {cyber}-{attacks},'' \emph{IEEE Trans. Smart Grid}, vol.~3, no.~3, pp.
  1362--1370, Sep. 2012.

\bibitem{ning_improved_2024}
C.~Ning and Z.~Xi, ``Improved stealthy false data injection attacks in
  networked control systems,'' \emph{IEEE Syst. J.}, vol.~18, no.~1, pp.
  505--515, Jan. 2024.

\bibitem{sun_2023_asymptotic}
K.~Sun, I.~Esnaola, A.~M. Tulino, and H.~V. Poor, ``Asymptotic learning
  requirements for stealth attacks on linearized state estimation,'' \emph{IEEE
  Trans. Smart Grid}, vol.~14, no.~4, pp. 3189 -- 3200, Jul. 2023.

\bibitem{sun_2019_learning}
------, ``Learning requirements for stealth attacks,'' in \emph{Proc. IEEE Int.
  Conf. on Acoust., Speech and Signal Process.}, Brighton, United Kingdom,
  2019, pp. 8102--8106.

\bibitem{lakshminarayana_2020_data}
S.~Lakshminarayana, A.~Kammoun, M.~Debbah, and H.~V. Poor, ``Data-driven false
  data injection attacks against power grid: A random matrix approach,''
  \emph{IEEE Trans. Smart Grid}, vol.~12, no.~1, pp. 635 -- 646, Jan. 2021.

\bibitem{xu_blending_2023}
W.~Xu, M.~Higgins, J.~Wang, I.~M. Jaimoukha, and F.~Teng, ``Blending data and
  physics against false data injection attack: An event-triggered moving target
  defence approach,'' \emph{IEEE Trans. Smart Grid}, vol.~14, no.~4, pp.
  3176--3188, Jul. 2023.

\bibitem{chen_localization_2022}
Y.~Chen, S.~Lakshminarayana, and F.~Teng, ``Localization of coordinated
  cyber-physical attacks in power grids using moving target defense and deep
  learning,'' in \emph{Proc. IEEE Int. Conf. on Smart Grid Commun.}, Oct. 2022,
  pp. 388--393.

\bibitem{cutsem_comprehensive_1995}
T.~Van~Cutsem, Y.~Jacquemart, J.-N. Marquet, and P.~Pruvot, ``A comprehensive
  analysis of mid-term voltage stability,'' \emph{IEEE Trans. Power Syst.},
  vol.~10, no.~3, pp. 1173--1182, 1995.

\bibitem{tan_survey_2023}
J.~Tan, H.~Jin, H.~Zhang \emph{et~al.}, ``A survey: When moving target defense
  meets game theory,'' \emph{Comput. Sci. Rev.}, vol.~48, p. 100544, 2023.

\bibitem{zhang_2012_flexible}
X.~Zhang, C.~Rehtanz, and B.~Pal, \emph{Flexible AC Transmission Systems:
  Modelling and Control}.\hskip 1em plus 0.5em minus 0.4em\relax Springer
  Science \& Business Media, 2012.

\bibitem{morrow_2012_topology}
K.~L. Morrow, E.~Heine, K.~M. Rogers, and T.~J. Bobba, R. B .and~Overbye,
  ``Topology perturbation for detecting malicious data injection,'' in
  \emph{Proc. 45th Hawaii Int. Conf. on System Sciences}, Maui, HI, USA, Jan.
  2012, pp. 2104--2113.

\bibitem{wang_2015_effects}
S.~Wang, W.~Ren, and U.~M. Al-Saggaf, ``Effects of switching network topologies
  on stealthy false data injection attacks against state estimation in power
  networks,'' \emph{IEEE Syst. J.}, vol.~11, no.~4, pp. 2640--2651, Nov. 2015.

\bibitem{MTD}
U.~D. of~Homeland~Security, ``Moving target defense,''
  https://www.dhs.gov/science-and-technology/csd-mtd, Jan. 2023.

\bibitem{zhang_sliding_2023}
G.~Zhang, D.~Tong, Q.~Chen, and W.~Zhou, ``Sliding mode control against false
  data injection attacks in dc microgrid systems,'' \emph{IEEE Syst. J.},
  vol.~17, no.~4, pp. 6159--6168, Jun. 2023.

\bibitem{tan_strategy_2025}
J.~Tan, T.~Zheng, H.~Jin \emph{et~al.}, ``A strategy-making method for {PIoT}
  {PLC} honeypoint defense against attacks based on the time-delay evolutionary
  game,'' \emph{IEEE Trans. Inf. Forensics Security}, 2025.

\bibitem{zhang_2020_analysis}
Z.~Zhang, R.~Deng, D.~K.~Y. Yau, P.~Cheng, and J.~Chen, ``Analysis of moving
  target defense against false data injection attacks on power grid,''
  \emph{IEEE Trans. Inf. Forensics Security}, vol.~15, pp. 2320--2335, Jul.
  2020.

\bibitem{zhang_2020_hiddenness}
------, ``On hiddenness of moving target defense against false data injection
  attacks on power grid,'' \emph{ACM Trans. Cyber-Physical Systems}, vol.~4,
  no.~3, pp. 1--29, Mar. 2020.

\bibitem{xu_2023_robust}
W.~Xu, I.~M. Jaimoukha, and F.~Teng, ``Robust moving target defence against
  false data injection attacks in power grids,'' \emph{IEEE Trans. Inf.
  Forensics Secur.}, vol.~18, pp. 29--40, 2023.

\bibitem{liu_reactance_2018}
C.~Liu, J.~Wu, C.~Long, and D.~Kundur, ``Reactance perturbation for detecting
  and identifying fdi attacks in power system state estimation,'' \emph{IEEE J.
  Sel. Topics Signal Process.}, vol.~12, no.~4, pp. 763--776, Aug. 2018.

\bibitem{tian_enhanced_2019}
J.~Tian, R.~Tan, X.~Guan, and T.~Liu, ``Enhanced hidden moving target defense
  in smart grids,'' \emph{IEEE Trans. on Smart Grid}, vol.~10, no.~2, pp.
  2208--2223, Jan. 2019.

\bibitem{lakshminarayana_2020_cost}
S.~Lakshminarayana and D.~K.~Y. Yau, ``Cost-benefit analysis of moving-target
  defense in power grids,'' \emph{IEEE Trans. Power Systems}, vol.~36, no.~2,
  pp. 1152--1163, Jul. 2020.

\bibitem{zhang_double-benefit_2022}
Z.~Zhang, Y.~Tian, R.~Deng, and J.~Ma, ``A double-benefit moving target defense
  against cyber-physical attacks in smart grid,'' \emph{IEEE Internet Things
  J.}, vol.~9, no.~18, pp. 17\,912--17\,925, Sep. 2022.

\bibitem{wang_MMTD_2023}
J.~Wang, J.~Tian, Y.~Liu, D.~Yang, and T.~Liu, ``{MMTD}: Multistage moving
  target defense for security-enhanced d-facts operation,'' \emph{IEEE Internet
  Things J.}, vol.~10, no.~14, pp. 12\,234--12\,247, Feb. 2023.

\bibitem{davis_power_2012}
K.~R. Davis, K.~L. Morrow, R.~Bobba, and E.~Heine, ``Power flow cyber attacks
  and perturbation-based defense,'' in \emph{Proc. IEEE 3rd Int. Conf. Smart
  Grid Commun.}, Nov. 2012, pp. 342--347.

\bibitem{rahman_2014_moving}
M.~A. Rahman, E.~Al-Shaer, and R.~B. Bobba, ``Moving target defense for
  hardening the security of the power system state estimation,'' in \emph{Proc.
  the First ACM Workshop on Moving Target Defense}, Arizona, Scottsdale, USA,
  Nov 2014, pp. 59--68.

\bibitem{giraldo_decentralized_2022}
J.~Giraldo, M.~El~Hariri, and M.~Parvania, ``Decentralized moving target
  defense for microgrid protection against false-data injection attacks,''
  \emph{IEEE Trans. Smart Grid}, vol.~13, no.~5, pp. 3700--3710, Sep. 2022.

\bibitem{wang_small-signal_2025}
B.~Wang, Z.~Zhang, M.~Wang, M.~Liu, R.~Ye, R.~Deng, and X.~Zhang,
  ``Small-signal-stability-guaranteed moving target defense against load
  redistribution attack on iot-based smart grid,'' \emph{IEEE Internet Things
  J.}, vol.~12, no.~11, pp. 17\,413--17\,429, Jun. 2025.

\bibitem{golub_matrix_2013}
G.~H. Golub and C.~F. Van~Loan, \emph{Matrix Computations}, 4th~ed.\hskip 1em
  plus 0.5em minus 0.4em\relax Baltimore, MD, USA: Johns Hopkins Univ. Press,
  2013.

\bibitem{li_feasibility_2020}
B.~Li, G.~Xiao, R.~Lu, R.~Deng, and H.~Bao, ``On {feasibility} and
  {limitations} of {detecting} {false} {data} {injection} {attacks} on {power}
  {grid} {state} {estimation} using {D-FACTS} {devices},'' \emph{IEEE Trans.
  Ind. Informat.}, vol.~16, no.~2, pp. 854--864, Feb 2020.

\bibitem{lakshminarayana_moving_2021}
S.~Lakshminarayana, E.~V. Belmega, and H.~V. Poor, ``Moving-target defense
  against cyber-physical attacks in power grids via game theory,'' \emph{IEEE
  Trans. Smart Grid}, vol.~12, no.~6, pp. 5244--5257, Jul. 2021.

\bibitem{liu_2021_optimal}
B.~Liu and H.~Wu, ``Optimal planning and operation of hidden moving target
  defense for maximal detection effectiveness,'' \emph{IEEE Trans. Smart Grid},
  vol.~12, no.~5, pp. 4447--4459, Sep. 2021.

\bibitem{wang_stealthiness_2025}
J.~Wang, J.~Tian, G.~Xiao, Y.~Liu, H.~Huang, Y.~Zhou, and T.~Liu, ``On
  stealthiness and effectiveness of moving target defense in smart grids,''
  \emph{IIEEE Trans. Ind. Inf.}, vol.~21, no.~4, pp. 2987--2996, Jan. 2025.

\bibitem{grainger_power_1994}
J.~J. Grainger and W.~D. Stevenson, \emph{Power {S}ystem {A}nalysis}.\hskip 1em
  plus 0.5em minus 0.4em\relax McGraw-Hill, 1994.

\bibitem{abur_power_2004}
A.~Abur and A.~G. Exp{\'o}sito, \emph{Power {System} {State} {Estimation}:
  {Theory} and {Implementation}}.\hskip 1em plus 0.5em minus 0.4em\relax CRC
  Press, Mar. 2004.

\bibitem{sun_2024_stealth}
K.~Sun, I.~Esnaola, and H.~V. Poor, ``Stealth attacks against moving target
  defense for smart grid,'' \emph{arXiv preprint arXiv:2411.16024}, 2024.

\bibitem{kay_1993_fundamentals}
S.~M. Kay, \emph{Fundamentals of Statistical Signal Processing: Estimation
  Theory}.\hskip 1em plus 0.5em minus 0.4em\relax Prentice-Hall, Inc., 1993.

\bibitem{rogers_2008_some}
K.~M. Rogers and T.~J. Overbye, ``Some applications of distributed flexible ac
  transmission system {(D-FACTS)} devices in power systems,'' in \emph{Proc.
  40th North American Power Symp.}, Calgary, Canada, Sep. 2008, pp. 1--8.

\bibitem{liu_2020_optimal}
B.~Liu and H.~Wu, ``Optimal d-facts placement in moving target defense against
  false data injection attacks,'' \emph{IEEE Trans. Smart Grid}, vol.~11,
  no.~5, pp. 4345--4357, Sep. 2020.

\bibitem{tarjan_efficiency_1975}
R.~E. Tarjan, ``Efficiency of a good but not linear set union algorithm,''
  \emph{J. ACM}, vol.~22, no.~2, pp. 215--225, Apr. 1975.

\bibitem{chen_survey_2025}
S.~Lakshminarayana, Y.~Chen, C.~Konstantinou, D.~Mashima, and A.~K. Srivastava,
  ``Survey of moving target defense in power grids: Design principles,
  tradeoffs, and future directions,'' \emph{IEEE Open Access J. Power Energy},
  vol.~12, pp. 455--469, January 2025.

\bibitem{oxley_matroid_2006}
J.~Oxley, \emph{Matroid Theory}.\hskip 1em plus 0.5em minus 0.4em\relax Oxford,
  UK: Oxford University Press, 2006.

\bibitem{krause_submodular_2014}
A.~Krause and D.~Golovin, ``Submodular function maximization,'' in
  \emph{Tractability: Practical Approaches to Hard Problems}, L.~Bordeaux,
  Y.~Hamadi, and P.~Kohli, Eds.\hskip 1em plus 0.5em minus 0.4em\relax
  Cambridge, U.K.: Cambridge University Press, 2014, pp. 71--104.

\bibitem{nemhauser_analysis_1978}
G.~L. Nemhauser, L.~A. Wolsey, and M.~L. Fisher, ``An analysis of
  approximations for maximizing submodular set functions—i,''
  \emph{Mathematical Programming}, vol.~14, no.~1, pp. 265--294, Feb. 1978.

\bibitem{zimmerman_matpower_2011}
R.~D. Zimmerman, C.~E. Murillo-S\'{a}nchez, and R.~J. Thomas, ``{MATPOWER}:
  {Steady}-{state} {operations}, {planning}, and {analysis} {tools} for {power}
  {systems} {research} and {education},'' \emph{IEEE Trans. Power Syst.},
  vol.~26, no.~1, pp. 12--19, Feb. 2011.

\bibitem{smartvalve_web}
\BIBentryALTinterwordspacing
{Smart Wires Inc.} (2024) Smartvalve\texttrademark: Advanced power flow control
  solution. Smart Wires Inc. Accessed: 2026-01-15. [Online]. Available:
  \url{https://www.smartwires.com/smartvalve/}
\BIBentrySTDinterwordspacing

\bibitem{veedu_information_2024}
M.~S. Veedu, D.~Deka, and M.~Salapaka, ``Information theoretically optimal
  sample complexity of learning dynamical directed acyclic graphs,'' in
  \emph{Proc. 27th Int. Conf. Artif. Intell. Statist. (AISTATS)}, ser. Proc.
  Mach. Learn. Res., vol. 238.\hskip 1em plus 0.5em minus 0.4em\relax PMLR, May
  2024, pp. 4636--4644.

\bibitem{rin_admittance_2025}
N.~Rin, I.~Shames, I.~R. Petersen, and E.~L. Ratnam, ``Electric grid topology
  and admittance estimation: Quantifying phasor-based measurement
  requirements,'' in \emph{2025 American Control Conference (ACC)}, Jul 2025,
  pp. 2122--2127.

\bibitem{cavraro_learning_2021}
G.~Cavraro, V.~Kekatos, L.~Zhang, and G.~B. Giannakis, ``Learning power grid
  topologies,'' in \emph{Advanced Data Analytics for Power Systems}, A.~Tajer,
  S.~M. Perlaza, and H.~V. Poor, Eds.\hskip 1em plus 0.5em minus 0.4em\relax
  Cambridge, U.K.: Cambridge Univ. Press, 2021, pp. 3--27.

\bibitem{seber_matrix_2008}
G.~A.~F. Seber, \emph{A {Matrix} {Handbook} for {Statisticians}}.\hskip 1em
  plus 0.5em minus 0.4em\relax John Wiley \& Sons, 2008.

\end{thebibliography}

\appendices
\section{Proof of Lemma \ref{Lem:colH}}
\label{Pro:Lem:colH}
The i-th column of the Jacobian matrix $\hv^i$ is defined by the sum over all branches in the set $\Ec$:
\begin{IEEEeqnarray}{c}
\hv^{i} = \sum_{e_{ij} \in \Ec} b_{ij} \hv_{e_{ij}}^{i}.
\end{IEEEeqnarray}
This summation can be partitioned based on branches incident to bus $\Ec_{v_i}$ versus all other branches $\Ec \setminus \Ec_{v_i}$:
\begin{IEEEeqnarray}{c}
\label{eq:Lem:colH}
\hv^{i} = \sum_{e_{ij} \in \Ec_{v_i}} b_{ij} \hv_{e_{ij}}^{i} + \sum_{e_{ij} \in { \Ec \setminus \Ec_{v_i} }} b_{ij} \hv_{e_{ij}}^{i}. 
\end{IEEEeqnarray}
As established in Lemma \ref{Lem:structureH}, the vector $\hv_{e_{ij}}^i$ is null for any branch $e_{ij}$ not incident to bus $v_i$. 
Consequently, every component in the second summation of \eqref{eq:Lem:colH} is zero, causing the entire sum to vanish.
The expression for $\hv^i$ therefore simplifies to the sum over only the incident branches:
\begin{IEEEeqnarray}{c}
\hv^{i} = \sum_{e_{ij} \in \Ec_{v_i}} b_{ij} \hv_{e_{ij}}^{i}.
\end{IEEEeqnarray}


\section{Proof of Lemma \ref{Lem:comH}}
\label{Pro:Lem:comH}

From Lemma \ref{Lem:structureH}, for any $e_{ij} \in \Ec$, it holds for matrix $b_{ij} \Hm_{e_{ij}}$ that the sum of its columns is a zero vector, i.e. 
\begin{IEEEeqnarray}{c}\label{Equ:add8}
\sum_{k=1}^{n+1} b_{ij} \hv_{e_{ij}}^k = b_{ij} \hv_{e_{ij}}^i + b_{ij} \hv_{e_{ij}}^j = \zerov.
\end{IEEEeqnarray}
Hence, from \eqref{Equ:add1}, it holds for $\Hm$ that 
\begin{IEEEeqnarray}{c}\label{Equ:add9}
\sum_{k=1}^{n+1} \hv^{k} = \zerov,
\end{IEEEeqnarray}
which implies that any column of $\Hm$ can be expressed as a linear combination of the others. 
To that end, removing any column from $\Hm$ does not change the span of the rest columns, which suggests that 
\begin{IEEEeqnarray}{c}\label{Equ:add2}
\operatorname{col}(\Hm) = \operatorname{col}(\bar{\Hm}).
\end{IEEEeqnarray}
Note that \eqref{Equ:add8} and \eqref{Equ:add9} hold for any branch with any admittance value. 
Then, under the same MTD strategy, it holds that 
\begin{IEEEeqnarray}{c}
\label{Equ:add3}
\operatorname{col}(\Hm^{\prime}) = \operatorname{col}(\bar{\Hm}^{\prime}).
\end{IEEEeqnarray}
Then combining \eqref{Equ:add2} with \eqref{Equ:add3} yields the equality in \eqref{Equ:comH}.


\section{Proof of Lemma \ref{Lem:deltaH}}
\label{Pro:Lem:deltaH}

Note that the set $\Ec$ can be decomposed into two parts: the set of branches with modified admittance $\Ec_m$ and the set of branches with unmodified admittance $\{ \Ec \setminus \Ec_{m} \}$.
Hence, from Lemma \ref{Lem:structureH}, it holds for the Jacobian matrix after MTD that 
\begin{IEEEeqnarray}{rCl}
\Hm^{\prime} 
& = &  \sum_{e_{ij} \in \Ec_{m}} \!\! \delta_{ij} b_{ij} \Hm_{e_{ij}} + \!\!\!\!\!\!\!\! \sum_{e_{ij} \in \{ \Ec \setminus \Ec_{m} \} }  \!\!\!\!\!\!\!\! b_{ij}   \Hm_{e_{ij}} \\
& = & \!\! \sum_{e_{ij} \in \Ec_{m}} \!\!\! (\delta_{ij}  \! - \!  1 ) b_{ij} \Hm_{e_{ij}} \!\! + \!\!\! \sum_{e_{ij} \in \Ec_{m}} \!\!\! b_{ij} \Hm_{e_{ij}} \!\! + \!\!\!\!\!\!\!\! \sum_{e_{ij} \in \{ \Ec \setminus \Ec_{m} \} }  \!\!\!\!\!\!\!\! b_{ij}   \Hm_{e_{ij}} \IEEEeqnarraynumspace\\
& = & \!\! \sum_{e_{ij} \in \Ec_{m}} \!\!\! (\delta_{ij} \! - \! 1 ) b_{ij} \Hm_{e_{ij}} \! + \Hm .
\end{IEEEeqnarray}


\section{Proof of Lemma \ref{Lem:ObjFunMax}}
\label{Pro:OptPro}
It holds for the stealthy attack space $\Sc_a$ that 
\begin{IEEEeqnarray}{rCl}
\dim({\cal S}_a)
& = & \dim(\operatorname{col}(\Hm) \cap \operatorname{col}(\Hm^{\prime})) \label{Equ:add4}\\
& = & \operatorname{rank}(\Hm) + \operatorname{rank}(\Hm^{\prime}) - \operatorname{rank}\left(\begin{bmatrix} \Hm \mid \Hm^{\prime} \end{bmatrix}\right) \label{Equ:add5} \IEEEeqnarraynumspace\\
& = & n + n -  \operatorname{rank}\left(\begin{bmatrix} \Hm \mid \Hm^{\prime} \end{bmatrix}\right) \label{eq:rank_sub} \\
& = & 2n - \operatorname{rank}\left(\begin{bmatrix} \Hm \mid \Delta\Hm \end{bmatrix}\right),\label{Equ:add10}
\label{eq:final_identity}
\end{IEEEeqnarray}
where \eqref{Equ:add4} follows from Lemma \ref{Lem:comH}; \eqref{Equ:add5} follows from \cite[3.23(a)]{seber_matrix_2008}; 
\eqref{eq:rank_sub} follows from the fact that the rank of $\Hm$ and $\Hm^{\prime}$ equals to $n$; 
and \eqref{Equ:add10} follows from the fact that subtracting  
the $i$-th column of $\Hm^{\prime}$ by the $i$-th column of $\Hm$ for all $i \in \{1, \ldots, n\}$ yields $\left[ \Hm \mid \Delta\Hm \right]$, and the fact that elementary transformation of a matrix does not change the rank of it. 


\section{Proof of Theorem \ref{Theo:deg1}}
\label{Pro:Theo:deg1}
For set $\Ec_m \cup \Ec_{add}$, it holds for the mismatch $\Delta\Hm^{\prime}$ that 
\begin{IEEEeqnarray}{c}
\Delta\Hm^{\prime} = \Delta\Hm +  \sum_{e_{ij} \in \Ec_{add}} (\delta_{ij} - 1) b_{ij} \Hm_{e_{ij}} = \Delta\Hm + \Delta \Hm_{add}. \squeezeequ \IEEEeqnarraynumspace
\end{IEEEeqnarray}
To that end, to prove the lemma, we only prove that any column of $\Delta \Hm_{add}$ can be linearly represented by the columns of $\Hm$. 
More specifically, we only need to prove that any column of $(\delta_{ij} - 1) b_{ij} \Hm_{e_{ij}}$ with $e_{ij} \in \Ec_{add}$ can be linearly represented by the columns of $\Hm$.

For any branch $e_{ij} \in \Ec_{leaf}$ with $v_i$ being the degree-one bus, it follows from Lemma \ref{Lem:colH} that 
the $i$-th column $\hv^{i}$ of the full Jacobian $\Hm$ is determined solely by that single branch and is given by 
\begin{equation}\label{Equ:rain1}
\hv^{i} = b_{ij}\hv_{e_{ij}}^i.
\end{equation}
And it follows from Lemma \ref{Lem:structureH} that the $i$-th column of $(\delta_{ij} - 1) b_{ij} \Hm_{e_{ij}}$ is $(\delta_{ij} - 1) b_{ij}\hv_{e_{ij}}^i$. 
Hence, the $i$-th column of $(\delta_{ij} - 1) b_{ij} \Hm_{e_{ij}}$ is linearly dependent of the $i$-th column of $\Hm$. 
Adding the fact that the $j$-th column of $(\delta_{ij} - 1) b_{ij} \Hm_{e_{ij}}$ is the negative of the $i$-th column (Lemma \ref{Lem:structureH}), it holds that the $j$-th column of $(\delta_{ij} - 1) b_{ij} \Hm_{e_{ij}}$ is also linearly dependent of the $i$-th column of $\Hm$.

\section{Proof of Theorem \ref{Theo:deg2}}
\label{Pro:Theo:deg2}
When the admittances of branches within $\Ec_m$ and $\Ec_m \cup \Ec_{add}$ are modified, the post-MTD Jacobian matrices are given by $\Hm_1 \eqdef \Hm+ \Delta \Hm$ and $\Hm_2 \eqdef  \Hm+ \Delta \Hm'$, respectively, i.e. 
\begin{IEEEeqnarray}{c}
\Hm \xrightarrow{\Ec_m} \Hm_1 \xrightarrow{\Ec_{add}} \Hm_2
\end{IEEEeqnarray}
Let $b_{ij}^{\prime}$ be the post-MTD branch admittance.
To that end, the difference between these two Jacobian matrices is 
\begin{IEEEeqnarray}{c}
\label{Equ:tansform}
\Delta\Hm_{add} = \sum_{e_{ij} \in \Ec_{add}} (\delta_{ij} - 1) b_{ij}^{\prime} \Hm_{e_{ij}},
\squeezeequ \IEEEeqnarraynumspace
\end{IEEEeqnarray}

Hence, it follows from Lemma \ref{Lem:colH} and Lemma \ref{Lem:deltaH} that $\Delta \Hm_{add}$ contains $\lvert \Ec_{add} \rvert + 1$ non-zero columns, which include $\lvert \Ec_{add} \rvert$ columns defined by
\begin{IEEEeqnarray}{c}\label{Equ:HA1}
\Delta \hv^j_{add} = ( \delta_{ij} \! - \! 1 ) b_{ij}^{\prime} \hv_{e_{ij}}^j
\end{IEEEeqnarray}
and one column defined as
\begin{IEEEeqnarray}{rCl}
\Delta \hv^i_{add} 
& = & \!\! \sum_{e_{ij} \in \Ec_{add}} \!\! ( \delta_{ij} \! - \! 1 ) b_{ij}^{\prime} \hv_{e_{ij}}^i = \!\! \sum_{e_{ij} \in \Ec_{add}} \!\! - ( \delta_{ij} \! - \! 1 ) b_{ij}^{\prime} \hv_{e_{ij}}^j \IEEEeqnarraynumspace \squeezeequ
\end{IEEEeqnarray}
with $\Delta \hv^i_{add}$ being a linear combination of the other $\lvert \Ec_{add} \rvert$ non-zero columns.

Consider the case in which $\Ec_{add} \subset \Ec_{v_i}^{\prime\prime}$ first.
Note that the degrees of the buses incident to bus $v_i$ are larger than one in the reduced graph, which implies that the corresponding columns of $\Hm_1$ are determined by 
\begin{IEEEeqnarray}{c}
\hv_1^{j} =  \sum_{e_{ij} \in \Ec_{v_j}'} \!\!\! b_{ij}' \hv_{e_{ij}}^{j}
\end{IEEEeqnarray}
with $\deg(v_j) >1 $. 
As a result, the $\lvert \Ec_{add} \rvert$ columns of the form in \eqref{Equ:HA1} are linearly independent of the columns in $\Hm_1$, which suggests that 
\begin{IEEEeqnarray}{rl}
&\operatorname{rank}\left(\left[\Hm \mid \Delta \Hm^\prime \right]\right) = 
\operatorname{rank}\left(\left[\Hm \mid \Hm_2 \right]\right) = \operatorname{rank}\left(\left[\Hm \mid \Hm_1 +\Delta\Hm_{add} \right]\right)  \IEEEnonumber \supersqueezeequ \\
&\quad = \operatorname{rank}\left(\left[\Hm \mid \Hm_1 \right]\right) + \lvert \Ec_{add} \rvert = \operatorname{rank}\left(\left[\Hm \mid \Delta \Hm \right]\right)+ \lvert \Ec_{add} \rvert.  \IEEEeqnarraynumspace  \supersqueezeequ
\end{IEEEeqnarray}

Then consider the case in which $\Ec_{add} = \Ec_{v_i}^{\prime\prime}$. 
The relation in \eqref{Equ:tansform} can be formulated as
\begin{IEEEeqnarray}{c}
\Hm \xrightarrow{\Ec_m} \Hm_1 \xrightarrow{\Ec_{add} = \Ec_{v_i}^{\prime\prime}} \Hm_2
\end{IEEEeqnarray}
For this case, it holds that 
\begin{IEEEeqnarray}{c}
\Ec_m \cup \Ec_{add} = \Ec_m \cup \Ec_{v_i}^{\prime}.
\end{IEEEeqnarray}
To that end, the MTD strategy that changes the admittance of branches in $\Ec_m \cup \Ec_{add}$ can be considered as: the admittance of branches in $\Ec_m$ is changed first, then the admittance of branches in $\Ec_{v_i}^{\prime}$ is further changed, in which the admittance of branches in $ \Ec_m \cap \Ec_{v_i}^{\prime}$ are changed twice. 
Then the case in which $\Ec_{add} = \Ec_{v_i}^{\prime\prime}$ can be considered as a case in which $\Ec_{add} = \Ec_{v_i}^{\prime}$, specifically, 
\begin{IEEEeqnarray}{c}
\Hm \xrightarrow{\Ec_m} \Hm_1 \xrightarrow{\Ec_{add} = \Ec_{v_i}^{\prime}} \Hm_2.
\end{IEEEeqnarray}

Let $\bar\Ec_{v_i}$ denote the set of branches removed from $\Ec_{v_i}$ during the graph reduction procedure, i.e. 
\begin{IEEEeqnarray}{c}
\bar\Ec_{v_i} = \{e_{ij} \in \Ec_{v_i} \mid \deg(v_j) = 1 \}.
\end{IEEEeqnarray}
Then it follows from Lemma \ref{Lem:colH} that the $i$-th column of $\Hm_1$ is given by 
\begin{IEEEeqnarray}{rCl}
\hv_{1}^i
& = & 
\!\! \sum_{e_{ij} \in \Ec_{v_i}^\prime} \!\!\! b_{ij}' {\hv}_{e_{ij}}^i +  \!\! \sum_{e_{ij} \in \bar\Ec_{v_i} }  \!\!\! b_{ij}' {\hv}_{e_{ij}}^i \\
& = & \!\!  \sum_{e_{ij} \in \Ec_{v_i}^\prime}  - b_{ij}' {\hv}_{e_{ij}}^j + \sum_{e_{ij} \in \bar\Ec_{v_i} } -  b_{ij}' {\hv}_{e_{ij}}^j \label{Equ:HA2}\\
& = & \!\! \sum_{e_{ij} \in \Ec_{v_i}^\prime} \!\!\! - \left( \Delta \hv_{add}^j \, / \, \left( \delta_{ij} \! - \! 1 \right) \right) - \!\!\! \sum_{e_{ij} \in \bar\Ec_{v_i} } \!\!\! b_{ij}' {\hv}_{e_{ij}}^j \label{Equ:HA3}
\end{IEEEeqnarray}
where \eqref{Equ:HA2} follows from Lemma \ref{Lem:structureH}; \eqref{Equ:HA3} follows from \eqref{Equ:HA1}. 
As a result, it holds that $\hv_1^i$ is a linear combination of $\lvert \Ec_{add} \rvert$ columns of $\Delta \Hm_{add}$ (excluding the $i$-th column) and some columns of $\Hm_1$. 
And only the remaining $\lvert \Ec_{add} \rvert - 1$ columns in $\Delta \Hm_{add}$ contribute to the rank increase, 
which suggests
\begin{IEEEeqnarray}{rl}
&\operatorname{rank}\left(\left[\Hm \mid \Delta \Hm^\prime \right]\right) = 
\operatorname{rank}\left(\left[\Hm \mid \Hm_2 \right]\right) = \operatorname{rank}\left(\left[\Hm \mid \Hm_1 +\Delta\Hm_{add} \right]\right) \IEEEnonumber \supersqueezeequ \\
&\ = \operatorname{rank}\left(\left[\Hm \mid \Hm_1 \right]\right) + \lvert \Ec_{add} \rvert - 1 = \operatorname{rank}\left(\left[\Hm \mid \Delta \Hm \right]\right)+ \lvert \Ec_{add} \rvert - 1.  \IEEEeqnarraynumspace  \Tsupersqueezeequ
\end{IEEEeqnarray}

\section{Proof of Theorem \ref{Theo:cycle}}
\label{Pro:Theo:cycle}
Let $\Delta\Hm$ and $\Delta\Hm'$ being the difference between the post- and pre-MTD Jacobian matrix when the admittance of branches in  $\Ec_m$ and ${\cal E}_{m}^\prime = {\cal E}_{m} \cup \{ e_{ij} \}$ are changed, respectively, i.e. 
This sequence can be represented as:
\begin{IEEEeqnarray}{c}
\Hm \xrightarrow{\Ec_m} \Hm_1 \xrightarrow{\{ e_{ij} \}} \Hm_2
\end{IEEEeqnarray}

Assume that there exists a subset $\Ec_{path} \subseteq \Ec_m$ such that the set $\Ec_{cycle} \eqdef \Ec_{path} \cup \{e_{ij}\}$ forms a cycle.
A central step in this proof is to reframe the MTD strategy with ${\cal E}_{m}^\prime = {\cal E}_{m} \cup \{ e_{ij} \}$ as a two-step sequence, which can be expressed as
\begin{IEEEeqnarray}{c}
\Ec_m \cup \{ e_{ij} \} = \Ec_m \cup (\Ec_{path} \cup \{ e_{ij} \}).
\end{IEEEeqnarray}
Firstly, the admittances of branches in the initial set $\Ec_m$ are modified;
Subsequently, the modifications corresponding to the newly formed cycle $\Ec_{path} \cup \{e_{ij}\}$ is applied, which yields that 
\begin{IEEEeqnarray}{c}
\Hm \xrightarrow{\Ec_m} \Hm_1 \xrightarrow{\Ec_{path} \cup \{ e_{ij} \}} \Hm_2.
\end{IEEEeqnarray}
Note that this implies that the admittances of branches in $\Ec_{path}$ are modified twice.

To that end, we only needs to prove that the inclusion of branch $e_{ij}$, which forms a cycle with branches already in $\Ec_m$, does not introduce any new linearly independent columns into $\Delta\Hm$.
Let $\Delta\Hm_{path}$ and $\Delta\Hm_{cycle}$ be the difference between post- and pre-MTD Jacobian matrix when the admittance of branches in $\Ec_{path}$ and $\Ec_{path} \cup \{e_{ij}\}$ are modified. 
Then from Lemmas \ref{Lem:colH}-\ref{Lem:deltaH}, it holds that $\Delta\Hm_{path}$ has $\lvert\Ec_{path}\rvert + 1$ non-zero columns and a rank of $\operatorname{rank}(\Delta\Hm_{path}) = \lvert\Ec_{path}\rvert$; 
and $\Delta\Hm_{cycle}$ also contains $\lvert\Ec_{path}\rvert + 1$ non-zero columns. 
Note that the sum of the columns in $\Delta\Hm_{cycle}$ is a zero vector, which implies the columns are linearly dependent. 
As a result, it holds that 
\begin{IEEEeqnarray}{c}
\operatorname{rank}(\Delta\Hm_{cycle}) = \lvert\Ec_{path}\rvert = \operatorname{rank}(\Delta\Hm_{path}), \IEEEeqnarraynumspace
\end{IEEEeqnarray}
which implies that 
\begin{IEEEeqnarray}{rl}
& \operatorname{rank}\left(\left[\Hm \mid \Delta \Hm \right]\right) = \operatorname{rank}\left(\left[\Hm \mid \Hm_1 \right]\right) = \operatorname{rank}\left(\left[\Hm \mid \Delta \Hm_{path} \right]\right) = \IEEEeqnarraynumspace \IEEEnonumber \supersqueezeequ\\
&  \  \operatorname{rank}\left(\left[\Hm \mid \Delta \Hm_{cycle}\right]\right)  = \operatorname{rank}\left(\left[\Hm \mid \Hm_2 \right]\right)  = \operatorname{rank}\left(\left[\Hm \mid \Delta \Hm^\prime \right]\right)\IEEEeqnarraynumspace  \supersqueezeequ
\end{IEEEeqnarray}


\end{document}